\documentclass[11pt,a4paper]{article}
\usepackage[utf8]{inputenc}
\usepackage[T1]{fontenc}
\usepackage{amsmath,amssymb,amsthm}
\usepackage{mathtools}
\usepackage[margin=2.6cm]{geometry}
\usepackage{enumitem}
\usepackage{needspace}
\usepackage{xcolor}
\setlist{topsep=3pt,itemsep=1.5pt,parsep=0pt,partopsep=0pt}
\usepackage{hyperref}
\hypersetup{
  colorlinks=true,linkcolor=blue!50!black,citecolor=blue!50!black,urlcolor=blue!50!black,
  pdftitle={The Complexity of Finding Additional Solutions},
  pdfauthor={Saman Farhat},
  pdfsubject={Fixed supplied lists, arbitrary supplied lists, and additional-solution search degrees},
  pdfkeywords={additional-solution problem, fixed supplied list, arbitrary supplied list,
    FNP, one-query reduction, search degree}
}

\theoremstyle{plain}
\newtheorem{theorem}{Theorem}[section]
\newtheorem{lemma}[theorem]{Lemma}
\newtheorem{proposition}[theorem]{Proposition}
\newtheorem{corollary}[theorem]{Corollary}
\theoremstyle{definition}
\newtheorem{definition}[theorem]{Definition}
\newtheorem{convention}[theorem]{Convention}
\newtheorem{remark}[theorem]{Remark}
\newtheorem{observation}[theorem]{Observation}

\newcommand{\Sig}{\Sigma}
\newcommand{\ran}{\operatorname{ran}}
\newcommand{\dm}{\operatorname{dom}}
\newcommand{\dg}{\operatorname{deg}}
\newcommand{\FNP}{\mathrm{FNP}}
\newcommand{\TFNP}{\mathrm{TFNP}}
\newcommand{\NP}{\mathrm{NP}}
\newcommand{\PP}{\mathrm{P}}
\newcommand{\lmT}{\leq^p_{1\text{-}T}}
\newcommand{\emT}{\equiv^p_{1\text{-}T}}
\newcommand{\lm}{\leq^p_m}
\newcommand{\eqm}{\equiv^p_m}
\newcommand{\bl}[1]{\overline{#1}}
\newcommand{\tup}[1]{\langle #1\rangle}
\newcommand{\Val}{\mathrm{Val}}

\newcommand{\Res}{\operatorname{Res}}
\newcommand{\sm}{\leq^p_{\rm sm}}

\title{The Complexity of Finding Additional Solutions}
\author{Saman Farhat\\[4pt]
  {\normalsize The Graduate Center of the City University of New York}\\
  {\normalsize Amirkabir University of Technology}\\[2pt]
  {\small\texttt{saman.farhat33@gc.cuny.edu}\qquad\texttt{saman.farhat@aut.ac.ir}}}
\date{}

\begin{document}
\maketitle

\begin{abstract}
We study the difficulty of finding a new solution when a list of distinct solutions is already supplied. For any fixed finite collection of list lengths, we can prescribe the additional-solution degrees while preserving a given $1$-paddable domain and first-solution degree. We also prove an exact compatibility condition for prescribing an arbitrary positive-list degree at the same time: it must dominate every prescribed fixed-level degree, and this condition suffices. All degrees use polynomial-time reductions with at most one oracle query and correct behavior on empty fibers; the realization needs no assumption about P versus NP. The combined construction gives each underlying witness exponentially many short counter copies. A list that exhausts these copies is long enough to pay for exhaustive search through the underlying witness bound. Separately, we construct a total problem whose first-solution task and every fixed-length additional-solution task are polynomial-time solvable, while its arbitrary-list task is FNP-complete. Assuming P differs from NP, the domain, first degree, and entire fixed-level degree sequence therefore do not determine the arbitrary-list degree. A constant fiber cap restores determination by finitely many fixed levels. The constructions change witness sets and do not prescribe an infinite sequence of degrees or enumerate all remaining solutions.
\end{abstract}

\section{Introduction}\label{sec:intro}

Finding a solution is often only the beginning. A person choosing a schedule may want several alternatives. Someone completing a puzzle may want to know whether another completion exists. In either case, the next task comes with information that the original task did not have: some solutions are already known. We ask how this information affects the difficulty of finding a new solution, and what changes when the number of known solutions grows.

Earlier work by Egidy, Gla{\ss}er, and Godau~\cite{EGG25} showed that finding a first solution and finding a second can have independently chosen difficulties. Starting from a search problem with a $1$-paddable domain, they construct another with exactly the same solvable instances and an equivalent first-solution task, but with a prescribed second-solution difficulty. They asked whether analogous results hold when a fixed number of distinct solutions are supplied, when the supplied list has arbitrary length, or when the goal is to find all remaining solutions~\cite[Section~5]{EGG25}. This paper extends the fixed-list construction to simultaneous finite prescription and determines when a prescribed arbitrary positive-list degree is compatible with those finitely many degrees. It does not address finding all remaining solutions.

Our main result allows several additional-solution tasks to be prescribed within one problem. We can choose the difficulty of finding a second solution, choose the difficulty of finding a third separately, and continue for any fixed finite number of steps. Throughout, the constructed problem has solutions on exactly the same instances as the original problem, and finding a first solution retains its original difficulty. The construction requires a recognizable collection of positive instances in which arbitrary inputs can be encoded and recovered efficiently. Under this assumption, controlling one of these additional-solution tasks places no restriction on the other finitely many tasks being prescribed.

A list of arbitrary length raises a different issue. Every fixed-list task reduces to the positive arbitrary-list task by using the same supplied list. Thus a prescribed arbitrary-list degree must dominate each prescribed fixed-level degree. We prove that this necessary condition is sufficient under the same padding assumption, while retaining the domain and first-solution degree. The new construction gives each underlying witness exponentially many copies identified by short binary counters. If a supplied list exhausts a represented witness's copies, its full encoded length is large enough to pay for searching all candidate underlying witnesses. This is a separate construction from the finite-copy construction, whose arbitrary-list degree we also characterize exactly in terms of the original and target residual tasks.

Assuming $\PP\ne\NP$, we also show that the difficulty for arbitrary-length lists is not determined by the domain, first-solution degree, and difficulties for all fixed list lengths. We construct a total search problem for which finding a first solution and finding another after any fixed number of supplied solutions are polynomial-time solvable. Yet finding another solution after an arbitrary-length supplied list is $\FNP$-complete, and hence is not polynomial-time solvable unless $\mathrm{P}=\mathrm{NP}$. We also clarify when an additional-solution task is guaranteed to have an answer and extend the total-search construction from the earlier work to any one fixed number of supplied solutions.

The contribution is a description of how much freedom there is in the difficulty of finding further solutions, together with an explanation of why fixed and growing lists behave differently. This freedom comes from constructing new solution sets while retaining the original positive instances and first-solution difficulty. The combined result prescribes finitely many fixed levels together with a compatible arbitrary positive-list degree; it does not prescribe an entire infinite sequence. Preserving an unchanged natural notion of solution and computing all remaining solutions remain outside the results established here.

\section{Search relations, supplied lists, and reductions}\label{sec:prelim}

The aim of this section is to establish a common language for comparing
the difficulty of finding a first solution with that of finding an
additional solution. We describe a search problem by specifying the
possible solutions for each instance. An ordinary algorithm must find one of these
solutions, while an additional-solution algorithm must find one outside
a supplied list. We introduce and explain the reductions used for comparing the difficulty of
such tasks, and distinguish lists of fixed length from lists whose
length is part of the input. Finally, we introduce the tools used in
our constructions.

An additional-solution problem needs three conventions: what counts as a
witness, how previously supplied witnesses are represented, and what an
algorithm must do when no answer exists. We fix all three on binary strings.

\subsection{Instances, witnesses, and verification}\label{sec:verification}

Let $\Sig=\{0,1\}$. The set $\Sig^*$ contains all finite binary strings,
and $|x|$ is the number of bits in $x$. The empty string is $\varepsilon$;
it is a possible witness, not an empty set of witnesses. A \emph{search
relation} is a set $f\subseteq\Sig^*\times\Sig^*$. The first coordinate
$x$ is an \emph{instance}, and a second coordinate $y$ with $(x,y)\in f$
is a \emph{witness}, or solution, for that instance. The \emph{fiber}
$f(x)=\{y:(x,y)\in f\}$ is its entire set of witnesses. The \emph{domain}
$\dm(f)=\{x:f(x)\ne\emptyset\}$ consists of the instances that have a
witness. A relation is \emph{total} if its domain is all of $\Sig^*$.

The \emph{graph} of $f$ is the set of pairs $(x,y)$ belonging to the
relation. A \emph{graph verifier} is an algorithm $V_f(x,y)$ that accepts
exactly those pairs. Thus a ``graph test'' means checking a candidate
witness. A polynomial-time graph
verifier halts on every pair within $t_f(|x|+|y|+1)$ steps for some
polynomial $t_f$. This bound includes rejected candidates, however long.
Checking a candidate and finding a candidate are different tasks: an
instance can have many efficiently checkable witnesses without an efficient
algorithm for selecting one.

A \emph{polynomial witness bound}, also called a length bound, is a
nonnegative integer-valued polynomial $b_f$ such that
\[
 y\in f(x)\quad\Longrightarrow\quad |y|\le b_f(|x|).
\]
It bounds every valid witness using the instance length alone. In contrast,
the verifier's time depends on both instance and candidate length. We use
monotone polynomial majorants: replacing a bound by a larger polynomial
with nonnegative coefficients preserves it and ensures that substituting a
larger argument cannot decrease the bound. In particular, if an encoded
input has length $L\ge |x|$, then $b_f(|x|)\le b_f(L)$.

Following the relation model of~\cite[Definitions 2.3--2.8]{EGG25},
$\FNP$ consists of the search relations with a polynomial-time graph
verifier and a polynomial witness bound. Here $\FNP$ describes efficiently verifiable search: a nondeterministic
machine can guess a witness and check it in polynomial time, but an
efficient deterministic algorithm for finding a witness is not required.
Since there are
only $2^{b_f(|x|)+1}-1$ binary strings through the witness bound, every
fiber is finite. This finiteness need not give a practical enumeration
algorithm, or a constant upper bound shared by all instances.
The class $\TFNP$ consists of the total relations in $\FNP$, that is,
those with domain $\Sig^*$~\cite[Definition~2.6]{EGG25}.

\subsection{Refinements and one-query comparisons}\label{sec:reductions}

A relation can allow several answers. To solve it, an algorithm chooses
one whenever possible and reports failure otherwise. Formally, a
\emph{refinement} of $f$ is a total function
$f^*:\Sig^*\to\Sig^*\cup\{\bot\}$ with $f^*(x)\in f(x)$ when the
fiber is nonempty and $f^*(x)=\bot$ exactly when it is empty. The symbol
$\bot$ is a special no-answer output, not a binary witness. Even when
$\varepsilon$ is valid, it differs from $\bot$. We call a relation
\emph{polynomial-time solvable} if it has a polynomial-time refinement.
This requires correct behavior on empty fibers as well as finding witnesses
on nonempty ones.

We write $f\lmT g$ if a deterministic polynomial-time oracle algorithm
computes a refinement of $f$ using at most one query to any refinement of
$g$. A query names an instance of $g$. The answer can be any witness in
that fiber; the reduction must succeed for every such answer and every
choice of target refinement. If the queried fiber is empty, the oracle
returns $\bot$, which the reduction may use. This does not give the
reduction a separate polynomial-time test for emptiness. Correctness is
required on every raw binary input, including malformed operator inputs.
The time spent writing the query, reading the oracle's answer, and
writing the output counts toward the reduction's polynomial running-time
bound; the time the oracle spends computing its answer does not.

One-query reductions between $\FNP$ relations compose. Given reductions
from $f$ to $g$ and from $g$ to $h$, simulate the first and, if it asks a
query, simulate the second on that query. The second simulation either
answers locally or makes its single query to $h$; use its decoded answer
to resume the first. This makes at most one final oracle query. For every
refinement of $h$, the inner simulation supplies a legal refinement answer
for $g$, including $\bot$ on an empty fiber, so the outer simulation is
correct. Composing polynomial query bounds with the witness bounds of
$g,h\in\FNP$ bounds all answers and outputs; the two clocked simulations
therefore have one polynomial running-time bound. No-query halts remain
no-query halts. This proves transitivity.

The notation $f\emT g$ means reducibility in both directions. An
\emph{$\FNP$ degree} is a one-query equivalence class represented by a
relation in $\FNP$, written $\dg(f)$. The least degree consists of polynomial-time-solvable relations,
including the empty relation.
A relation is \emph{$\FNP$-complete} if it belongs to $\FNP$ and every
relation in $\FNP$ reduces to it under $\lmT$.

For languages, we write $L\lm H$ if there is a total polynomial-time
computable map $h$ such that $x\in L$ if and only if $h(x)\in H$.
This is called a polynomial-time many-one reduction from $L$ to $H$.
Write $L\eqm H$ when both $L\lm H$ and $H\lm L$ hold.
Thus a many-one reduction transforms a membership question for one
language into an equivalent membership question for another.
The class $\NP$ consists of languages whose members have
polynomial-length witnesses that can be verified in deterministic
polynomial time; nonmembers have no accepting witness.
A language is $\NP$-complete if it belongs to $\NP$ and every language
in $\NP$ is polynomial-time many-one reducible to it.

\subsection{Supplied lists and residual operators}\label{sec:lists-residuals}

To ask for another witness, we supply a list of witnesses already known.
The list is an ordered representation of the set to exclude. Repetitions
are rejected, so its length counts distinct witnesses.

\begin{convention}[List encoding]\label{con:lists}
Fix an injective encoding of finite lists of binary strings with
polynomial-time encoding and decoding. We use
\[
 \tup{w_1,\ldots,w_a}
 =1^a0\,1^{|w_1|}0w_1\cdots1^{|w_a|}0w_a.
\]
The decoder checks the number of
fields, called the \emph{arity}, and rejects malformed strings.
Decoding must read one complete valid code and leave no unused trailing
bits; we call this \emph{complete decoding}. For example, $\tup{x}$
followed by an extra bit is rejected, since its arity permits only the
already completed field. The empty string is not a list code.
\end{convention}

For fixed $r\ge1$, the input $\tup{x,y_1,\ldots,y_r}$ is
\emph{valid for $f$} if all $y_a$ belong to $f(x)$ and are pairwise
distinct. Let $\Val_r(f)$ denote the valid raw codes. On a valid input set
\[
 f_{[r]}(\tup{x,y_1,\ldots,y_r})=f(x)\setminus\{y_1,\ldots,y_r\};
\]
on every other raw string the fiber is empty. This is the fixed
\emph{additional-solution level}: $r$ counts the supplied witnesses, so
level one asks for a second solution. It follows the operator
of~\cite[Definition 2.9]{EGG25}, based on the $n$-ASP
formalization~\cite[Definitions 2.4--2.5]{YS02}.

The \emph{invalid-input totalization} $f_{[r]t}$ agrees with $f_{[r]}$
on valid inputs and has fiber $\{\varepsilon\}$ on invalid inputs.
A valid list is \emph{exhausted} when it contains every member of $f(x)$.
Such a list has an empty residual in $f_{[r]}$ and $f_{[r]t}$. Totalization
therefore does not necessarily produce a total search relation.
For example, if $f(x)=\{00,01\}$, supplying $11$ is invalid, supplying
$00$ is valid and leaves $01$, and supplying both $00,01$ is valid but
exhausted. Only the first case gets $\varepsilon$ from totalization.

Validity is polynomial-time decidable: decode, check arity, verify each
supplied witness, and compare the strings pairwise. To verify a residual
candidate, also check its $f$-membership and that it differs from every
supplied witness. Its length is at most $b_f(L)$, where $L$ is the complete
encoded-input length. For the invalid-input totalization, first test whether the supplied
list is valid. If it is invalid, accept the candidate exactly when it
is $\varepsilon$. If it is valid, accept the candidate exactly when
it belongs to $f(x)$ and differs from every supplied witness.
The verifiers for $f_{[r]}$ and $f_{[r]t}$ run in polynomial time, and every accepted candidate has
polynomially bounded length. Hence $\Val_r(f)\in\PP$ and
$f_{[r]},f_{[r]t}\in\FNP$. These verifiers check individual candidates;
they do not need to decide whether the supplied list exhausts $f(x)$.
On a valid exhausted list, every candidate is rejected.

\begin{definition}[Variable-list operators]\label{def:variable-operators}
On a completely decoded valid list $\tup{x,y_1,\ldots,y_r}$, define
\[
 f_{[\star]}(\tup{x,y_1,\ldots,y_r})
 =f(x)\setminus\{y_1,\ldots,y_r\}\quad(r\ge1).
\]
Define $\Res(f)$ by the same formula with $r\ge0$, so
$\Res(f)(\tup{x})=f(x)$. All other raw inputs have empty fibers.
Their invalid-input totalizations $f_{[\star]t}$ and $\Res(f)_t$ give
$\varepsilon$ exactly on invalid inputs and are equal to their counterparts on valid
ones. In particular, valid exhausted lists stay empty.
We write $f_{[0]}:=f$ only when raw level zero is explicitly needed.
\end{definition}

The star permits an arbitrary positive number of supplied witnesses.
The residual operator also permits none. Its input $\tup{x}$ then asks
the same search question as $f(x)$ but uses a different input string. On valid supplied sets, this is the residual search task called
\emph{AnotherSolution} in~\cite[Section~2.2.2, pp.~28--29]{Str10}
and \emph{AnotherSol} in~\cite[p.~5, Proposition~7]{CS17v2}.
We specify raw encodings and invalid-input behavior separately here.

\subsection{Tags, copies, and component selection}\label{sec:tags}

A \emph{tag} is a label stored as part of a binary code. It tells a decoder
how to interpret the rest of that code. We use tags in two different ways.
A \emph{witness-copy tag} distinguishes several copies of the same
underlying witness.
A \emph{component-selector tag} identifies which search relation is
being used. For example, to combine two relations $F_1$ and $F_2$,
an input $\langle 0,x\rangle$ asks for a witness of $F_1(x)$, whereas
$\langle 1,x\rangle$ asks for a witness of $F_2(x)$. An output carries
the same selector as its input, so the decoder knows which relation
its witness belongs to.
The following conventions specify the encodings for both uses.

\begin{convention}[Shared finite tag encoding]\label{con:blocks}
Choose a fixed finite ambient set $H$ of nonnegative integer labels.
If $H\ne\emptyset$, use
$\ell=\max(1,\lceil\log_2(\max H+1)\rceil)$ bits; if $H=\emptyset$,
use $\ell=1$. For $t\in H$, write $\bl t$ for its $\ell$-bit binary
code. One ambient set and one width are shared by every tag set in a
construction. Thus for any $T\subseteq H$ and $Y\subseteq\Sig^*$,
\[
 T\cdot Y=\{\bl t u:t\in T,\ u\in Y\}.
\]
Here juxtaposition concatenates the prefix and suffix. A constant with
label $t$ is the string $\bl t$ itself. For a label set $K\subseteq H$,
its constant strings are $\{\bl t:t\in K\}$.
For example, labels $2,3,4$ in the ambient set $\{0,1,2,3,4\}$ have
codes $010,011,100$. The subset need not start at zero. We never separately
renumber two tag sets from zero: their shared numerical codes prevent
collisions.

The map $(t,u)\mapsto\bl t u$ is injective because the fixed tag width lets us recover both $t$ and $u$ uniquely. We write $\uplus$ for disjoint union. Constants and witness copies use different tags, so they cannot be equal, even when $u=\varepsilon$ and the copy consists only of its tag. Tagging and untagging leave $\bot$ unchanged.
\end{convention}

\begin{convention}[Finite tagged sums]\label{con:tagged-sum}
Let $(F_i)_{i\in I}$ be a fixed finite family of search relations, indexed
by a fixed finite set $I$, with $k=|I|$. These are relations on instance/witness pairs,
not fibers of one instance. Fix an enumeration $i_0,\ldots,i_{k-1}$ of
$I$ and encode the selector $i_a$ by the $d$-bit code $J_{i_a}$ of its
position $a$, where $d=\max(1,\lceil\log_2 k\rceil)$ for $k\ge1$.
Define
\[
 \Bigl(\bigoplus_{i\in I}F_i\Bigr)(\tup{J_i,z})
   =\{\tup{J_i,v}:v\in F_i(z)\}.
\]
Both input and candidate output must decode into exactly two fields,
with an in-range fixed-width selector. Their selectors must match: an
output naming a different component is rejected. Complete decoding rejects
unused suffixes. Every other input has an empty fiber. For $I=\emptyset$
the sum is everywhere empty, without evaluating $\log_2 0$.
\end{convention}

The notation $F_0,\ldots,F_{k-1}$ uses $I=\{0,\ldots,k-1\}$;
indices starting at one, or any other finite labels, use the same definition.
For two components $F_1,F_2$, their selectors are $J_1=0,J_2=1$.
Input $\tup{0,z}$ asks for $F_1(z)$ and accepts only outputs
$\tup{0,v}$ with $v\in F_1(z)$. An answer $\tup{1,v}$ is invalid even
if its suffix happens to be a witness for both relations. The selector identifies which relation supplies the witness; the witness-copy tag introduced earlier identifies a copy of a witness within that relation. If a component witness is $\bl t u$, the sum output is $\langle J_i,\bl t u\rangle$: $J_i$ selects the relation $F_i$, while $t$ labels the copy of $u$. If sums are nested, successive selectors identify a component through several layers. We can replace this sequence by one selector for each final component, giving an equivalent single finite sum. For example, $(F_1\oplus F_2)\oplus F_3\emT F_1\oplus F_2\oplus F_3$: each direction uses one oracle query, with the selectors decoded and re-encoded.

Componentwise one-query equivalence of two fixed finite $\FNP$ families
gives equivalence of their tagged sums: decode the source selector, run
that component's reduction, and re-encode any witness with the source
selector. Return $\bot$ on invalid outer inputs. Adjoining everywhere-empty
components also leaves the degree unchanged. Map the old selectors in each
direction and, on an added empty component, return $\bot$ locally. Since
the families are fixed and finite, a single polynomial majorizes all
component query, answer, and computation bounds.

\subsection{Residual splitting and tagged copies}

The next lemma makes the raw-input and resource conventions explicit. Its
equivalence says that permitting zero supplied witnesses adds just one
component: the original first-solution task.

\begin{lemma}[Residual operators and the zero slice]\label{lem:res-split}
If $f\in\FNP$, then $f_{[\star]}$, $\Res(f)$, and their invalid-input
totalizations belong to $\FNP$. Finite tagged sums preserve $\FNP$, and
\[
 \Res(f)\emT f\oplus f_{[\star]}.
\]
\end{lemma}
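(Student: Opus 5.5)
The plan has three parts: membership of the operators in $\FNP$, closure of finite tagged sums, and the two one-query reductions. All of them work directly from the verifier and witness-bound conventions of Section~\ref{sec:lists-residuals}.

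\textbf{Membership.} For $f_{[\star]}$ and $\Res(f)$, the verifier completely decodes the input $\tup{x,y_1,\ldots,y_r}$ and reads the arity from the unary prefix.
\begin{itemize}
\item It checks $r\ge1$ (for $f_{[\star]}$) or $r\ge0$ (for $\Res(f)$).
\item It runs $V_f(x,y_a)$ for each $a$, compares the $y_a$ pairwise, and then checks $V_f(x,y)$ and $y\ne y_a$ for the candidate $y$.
\end{itemize}
Since $r\le L$, where $L$ is the encoded input length, and each field is at most $L$ long, there are at most $L$ verifier calls of cost $t_f(2L+1)$ and $O(L^2)$ comparisons. The main point is that this bound is polynomial in $L$ even though $r$ is unbounded; that is why the arity is written in unary. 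Any accepted $y$ lies in $f(x)$, so $|y|\le b_f(|x|)\le b_f(L)$ by the monotone majorant. The totalizations accept $\varepsilon$ exactly on invalid inputs, and validity is decided by the same procedure. So all four relations are in $\FNP$.

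\textbf{Tagged sums.} For a fixed finite family in $\FNP$, the verifier decodes input and output, checks that the selectors match and are in range, and runs the selected component's verifier. Take the witness bound to be the sum of the component bounds plus the constant encoding overhead. Both resources are majorized by one polynomial because the family is finite.

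\textbf{$\Res(f)\lmT f\oplus f_{[\star]}$.} On a raw input, first decode completely and test validity, which is polynomial time. If the input is invalid, the fiber is empty and we output $\bot$ locally. If it is valid with $r=0$, i.e.\ it is $\tup{x}$, query $\tup{J_0,x}$ and strip the selector, passing $\bot$ through. If it is valid with $r\ge1$, query $\tup{J_1,\tup{x,y_1,\ldots,y_r}}$ and strip the selector. The fibers coincide exactly, so every legal oracle answer yields a legal refinement value.

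\textbf{$f\oplus f_{[\star]}\lmT\Res(f)$.} Decode the selector and return $\bot$ on malformed inputs.
\begin{itemize}
\item On $\tup{J_0,x}$, query $\tup{x}$, whose $\Res(f)$-fiber is $f(x)$, and re-tag the answer.
\item On $\tup{J_1,z}$, check that $z$ is a valid list of arity at least $1$, returning $\bot$ otherwise. Then query $z$ itself, since $\Res(f)(z)=f_{[\star]}(z)$, and re-tag.
\end{itemize}

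The only delicate step is making sure every raw string is handled. A string such as $\tup{J_1,\tup{x}}$ has arity $0$ and must yield $\bot$, not a $\Res$ query; this is why the reduction checks the arity. On empty fibers, $\bot$ is simply propagated.
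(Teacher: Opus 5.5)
Your proposal is correct and follows the paper's proof essentially step for step: the same validating verifier with the $r\le L$ resource count and the bound $b_f(|x|)\le b_f(L)$, the same selector-matching verifier for finite sums, and the same two reductions. Both reductions validate locally before making a single query, which in particular sends $\tup{J_1,\tup{x}}$ to $\bot$. One small correction: the list encoding doubles field lengths, so the sum's witness bound should have the form $5+2d+2\max_i b_{F_i}(L)$, as in the paper, rather than the component bounds plus a constant overhead.
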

\begin{proof}
Let $w$ be a residual-operator input, $L=|w|$, and $v$ a candidate witness. Decode $w$ completely. Require an instance field $x$, followed by at least one supplied witness for $f_{[\star]}$, or possibly none for $\operatorname{Res}(f)$. Check that every supplied witness belongs to $f(x)$ and that they are pairwise distinct. Reject if any check fails. Otherwise, accept exactly when $v\in f(x)$ and $v$ differs from every supplied witness.

There are at most $L$ supplied strings, each of length at most $L$, so these checks take polynomial time in $L+|v|+1$. Every accepted witness has length at most $b_f(|x|)\le b_f(L)$. For invalid-input totalization, perform the same validity check but accept only $v=\varepsilon$ when the input is invalid. This zero-length witness also satisfies the bound. Valid exhausted lists still have no witnesses. Thus the residual operators and their invalid-input totalizations belong to $\mathrm{FNP}$, without requiring a test for exhaustion.

We next verify membership in $\mathrm{FNP}$ for finite tagged sums. If the family is empty, reject every candidate. Otherwise, decode the input as $\langle J_i,z\rangle$ and the candidate as $\langle J_k,v\rangle$, requiring exactly two fields in each code and no unused bits. Check that both selectors have the required width and range and that $J_i=J_k$. Then accept exactly when $v\in F_i(z)$. The family is fixed and finite, so verification takes polynomial time. If the input length is $L$, then $|z|\le L$, and the output length is bounded by
\[
5+2d+2\max_i b_{F_i}(L).
\]
Hence finite tagged sums preserve $\mathrm{FNP}$.

To reduce $\operatorname{Res}(f)$ to $f\oplus f_{[\star]}$, use selector $0$ for $f$ and selector $1$ for $f_{[\star]}$. Validate the source input first, returning $\bot$ without a query if it is invalid. If the input is a zero-witness list $\langle x\rangle$, query the sum at $\langle0,x\rangle$. If it is a valid positive list with code $w$, query $\langle1,w\rangle$. In either case, the selected component has exactly the witnesses required by the source task. Delete the outer selector from any witness answer and return its second field. Return $\bot$ when the oracle returns $\bot$. This handles both empty original fibers and exhausted supplied lists.

For the reverse reduction, decode the outer sum input completely and check its selector. Return $\bot$ if this code is invalid. On input $\langle0,x\rangle$, query $\operatorname{Res}(f)$ at $\langle x\rangle$. On input $\langle1,w\rangle$, validate $w$ as a positive supplied list for $f$. Return $\bot$ if validation fails; otherwise, query $\operatorname{Res}(f)$ at $w$. Wrap any witness answer in the original selector, and preserve $\bot$. The zero-witness query recovers $f(x)$, while a positive query has the same remaining witnesses as $f_{[\star]}$. Every legal oracle answer therefore gives a correct source answer.

Each reduction makes at most one query. Validation and encoding take polynomial time in the complete source-input length, and every query has polynomial length. The target witness bounds also give polynomial bounds on every legal oracle answer and the resulting output. Reading, decoding, and writing these strings therefore take polynomial time. Neither reduction decides exhaustion before querying.
\end{proof}

Copying a witness changes the number of solutions without changing the
first-solution search degree. It can also guarantee a new answer from a
short list. Here the guarantee is stronger than mere solvability: every
valid short list has a remaining witness.

\begin{lemma}[Creating copies of each witness]\label{lem:inflate}
Let $g\in\FNP$ and fix $q\ge2$. Using the shared tag encoding for $H=\{0,\ldots,q-1\}$, define
\[
\widetilde{g}(x)=H\cdot g(x).
\]
Thus each witness of $g(x)$ has $q$ distinct tagged copies. Then $\widetilde{g}\in\FNP$, $\operatorname{dom}(\widetilde{g})=\operatorname{dom}(g)$, and $\widetilde{g}\emT g$. For every fixed $1\le r<q$, the additional-solution task $\widetilde{g}_{[r]}$ is polynomial-time solvable. Every valid supplied list of length $r$ leaves an unused copy of the underlying witness encoded by its first entry.
\end{lemma}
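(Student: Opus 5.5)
We verify the four claims in order, using the injectivity of $(t,u)\mapsto\bl t u$ from Convention~\ref{con:blocks} throughout. Since the tag width $\ell$ is fixed, every string of length at least $\ell$ splits uniquely into a prefix $\bl t$ and a suffix $u$.

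\textbf{Membership in $\FNP$.} The verifier for $\widetilde g$ on a pair $(x,v)$ does the following. It rejects if $|v|<\ell$. Otherwise it splits $v=\bl t u$, checks that $t\in H$ (that is, $t<q$), and runs $V_g(x,u)$. This takes polynomial time in $|x|+|v|+1$. Every accepted witness satisfies $|v|\le \ell+b_g(|x|)$, and since $\ell$ is a constant this is a polynomial witness bound.

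\textbf{Domain.} Because $q\ge1$, the fiber $\widetilde g(x)=H\cdot g(x)$ is nonempty exactly when $g(x)$ is nonempty. Hence $\dm(\widetilde g)=\dm(g)$.

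\textbf{Equivalence $\widetilde g\emT g$.} Each direction uses one query on the same instance.
\begin{itemize}
\item To reduce $\widetilde g$ to $g$: on input $x$, query $g$ at $x$. Return $\bot$ if the answer is $\bot$, and otherwise return $\bl 0u$ for the answer $u$.
\item To reduce $g$ to $\widetilde g$: on input $x$, query $\widetilde g$ at $x$. Preserve $\bot$. Otherwise the answer is some $\bl t u$ with $u\in g(x)$; strip the tag and return $u$.
\end{itemize}
Every legal oracle answer has polynomially bounded length, because $g,\widetilde g\in\FNP$. Both directions handle empty fibers correctly, since $\bot$ is returned exactly when the queried fiber, and hence the source fiber, is empty.

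\textbf{Polynomial-time solvability of $\widetilde g_{[r]}$ for $1\le r<q$.} The algorithm needs no oracle.
\begin{enumerate}
\item Decode the input completely as $\tup{x,y_1,\ldots,y_r}$, rejecting anything whose arity is not $r+1$.
\item Check validity: each $y_a\in\widetilde g(x)$, using the verifier above, and the $y_a$ pairwise distinct. Output $\bot$ if any check fails, since invalid inputs have empty fibers.
\item If the input is valid, write $y_1=\bl{t_1}u_1$, so that $u_1\in g(x)$. The copies $\bl t u_1$ with $t\in H$ are $q$ distinct members of $\widetilde g(x)$, by injectivity.
\item Among these $q$ copies, at most $r<q$ are equal to some $y_a$. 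Scan $t=0,1,\ldots,q-1$ and output the first $\bl t u_1$ that differs from every $y_a$.
\end{enumerate}
Step~4 always finds an answer on a valid input, which gives the stated guarantee that every valid list of length $r$ leaves an unused copy of the witness $u_1$. The running time is polynomial because $q$ is constant and validation is polynomial. The output is correct: it lies in $\widetilde g(x)$ and is outside the supplied list, and $\bot$ is returned exactly on the inputs whose fiber is empty, which here are precisely the invalid ones.

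\textbf{Main obstacle.} The only delicate point is the raw-input bookkeeping. The algorithm must output $\bot$ exactly on invalid codes, rather than risk an empty answer being mistaken for a witness. It must also never need to decide whether $g(x)$ itself is empty. It does not, because validity of a list with $r\ge1$ entries already certifies $u_1\in g(x)$.
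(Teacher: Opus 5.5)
Your proposal is correct and follows the paper's proof essentially step for step. It uses the same verifier, which splits off the fixed-width $\ell$-bit tag and gives the witness bound $\ell+b_g(|x|)$, the same tag-$0$ and tag-stripping reductions that preserve $\bot$, and the same oracle-free algorithm for $\widetilde g_{[r]}$: validate the list, then return an absent copy of the first entry's underlying witness $u_1$, which exists because the list has only $r<q$ entries, with validity itself certifying $u_1\in g(x)$.
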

\begin{proof}
For each $t\in H$, let $c_t$ denote its fixed-width $\ell$-bit code. To verify a candidate witness of $\widetilde{g}(x)$, reject if it has fewer than $\ell$ bits. Otherwise, read its first $\ell$ bits as a tag and its remaining suffix as $u$. Accept exactly when the prefix is $c_t$ for some $t\in H$ and $V_g(x,u)$ accepts. This takes polynomial time, and every accepted witness has length at most $\ell+b_g(|x|)$. Hence $\widetilde{g}\in\mathrm{FNP}$. Its fiber is nonempty exactly when $g(x)$ is nonempty, so their domains agree.

For $\widetilde{g}\lmT g$, query $g$ at $x$ and prefix any witness answer with $c_0$. For $g\lmT\widetilde{g}$, query $\widetilde{g}$ at $x$ and delete the first $\ell$ bits from any witness answer. Both reductions return $\bot$ when the oracle returns $\bot$. Every legal answer is handled correctly, and the fixed tag length and witness bounds give polynomial running time.

Now fix $1\le r<q$. Decode the source input completely, require exactly $r$ supplied witnesses, and check their membership in $\widetilde{g}(x)$ and pairwise distinctness. Return $\bot$ if validation fails. Otherwise, let $u$ be the suffix of the first supplied witness. Since $u\in g(x)$, all $q$ strings
\[
\{c_tu:t\in H\}
\]
are distinct valid witnesses of $\widetilde{g}(x)$. The supplied list contains only $r<q$ witnesses, so at least one of these copies is absent. Compare the copies with the supplied list and return an absent one. This also works when $u=\varepsilon$: the copies are then the distinct tag strings themselves.

Validation and the fixed number of comparisons take polynomial time. Thus the algorithm solves $\widetilde{g}_{[r]}$ on every raw input, and every valid input has an additional witness.
\end{proof}

\subsection{A usable set of positive instances}\label{sec:padding}
In the main construction, we start with a search relation $A$ and modify its witnesses so that finding additional solutions has a prescribed difficulty. We must keep its domain unchanged: an instance has a witness in the new relation exactly when it has one in $A$. To encode a target search task, we therefore need a way to assign each target instance to a distinct instance in $\operatorname{dom}(A)$. This assignment must be computable in polynomial time, and we must be able to recognize the assigned instances and recover the target instances from them. The following padding condition provides these maps.
To place target tasks inside an existing domain, we need positive instances
we can construct and recognize. A language $X$ is \emph{$1$-paddable via
$p$} if $p:\Sig^*\to\Sig^*$ is total and injective, $p$ and its inverse
are polynomial-time computable, and $\ran(p)\subseteq X$
\cite[Definition 3.1]{EGG25}. The inverse returns $\bot$ off the image.
Consequently $\ran(p)$ is polynomial-time decidable by running the inverse.
Both maps have polynomial output bounds, since their algorithms must write
their outputs. This assumption assigns every binary string $z$ a distinct element $p(z)\in X$, with both the assignment and recovery of $z$ computable in polynomial time. The assumption lets us recognize $\operatorname{ran}(p)$ in polynomial time; it does not require us to decide membership in all of $X$. Section~\ref{sec:gamma} uses disjoint subsets of $\operatorname{ran}(p)$ as slices; they do not cover that image, since $p(\varepsilon)$ is generic.

\section{Prescribing finitely many additional-solution levels}\label{sec:gamma}
We ask whether the difficulty of finding another solution can be chosen separately for different numbers of supplied solutions. For example, can we prescribe one search degree for finding a second solution and another for finding a third, while preserving the domain and the first-solution degree?
We now prove this for any fixed finite number of levels. Given a base relation $A$ and target relations $B_1,\ldots,B_n$, we construct one relation with the same domain and first-solution degree as $A$, whose level-$j$ additional-solution task is equivalent to $B_j$ for each $1\le j\le n$. The targets can be chosen independently and may be empty. The stated predecessor theorem assumes a nonempty target~\cite[Theorem~4.6]{EGG25}; allowing empty targets extends that hypothesis of the construction. This is not a claim that the least search degree was absent from earlier work. An empty target makes the corresponding level polynomial-time solvable, including correct reporting when no additional solution exists.
\begin{theorem}[Simultaneous spectrum realization]\label{thm:gamma}
Fix $n\ge1$. Let $A,B_1,\ldots,B_n\in\FNP$, and suppose that $\dm(A)$
is $1$-paddable via $p$. There is a relation
$C\in\FNP$ such that
\[
 \dm(C)=\dm(A),\qquad C\emT A,\qquad
 C_{[j]}\emT B_j\quad(1\le j\le n).
\]
\end{theorem}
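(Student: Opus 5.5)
The plan is to build $C$ from three kinds of instances, separated by the padding map $p$, and then check the required equivalences one at a time. Fix a tag set $H$ (Convention~\ref{con:blocks}) containing $n+1$ copy tags $0,\ldots,n$ and $n$ further constant labels $n+1,\ldots,2n$. Write $\kappa_1,\ldots,\kappa_n$ for the constant strings of these further labels, and $K_j=\{\kappa_1,\ldots,\kappa_j\}$. Write $Q=\{0,\ldots,n\}$, so that $Q\cdot Y$ consists of $n+1$ tagged copies of each element of $Y$.

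The instances of $C$ are classified as follows.

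\emph{Slice $j$ instances}, $1\le j\le n$. These are $x=p(\tup{\bl j,z})$, and we set
\[
C(x)=K_j\uplus Q\cdot B_j(z).
\]

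\emph{Slice $0$ instances}. These are $x=p(\tup{\bl 0,y})$ with $y\in\ran(p)$, and we set $C(x)=Q\cdot A(y)$.

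\emph{Generic instances}. Every other $x$ gets $C(x)=Q\cdot A(x)$.

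All three classes are recognized in polynomial time by inverting $p$ and decoding completely. Hence $C\in\FNP$, by the verifier argument of Lemma~\ref{lem:inflate}.

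\emph{Domain.} Slice $j$ instances lie in $\ran(p)\subseteq\dm(A)$, and their fibers contain $\kappa_1$, so they are nonempty. Slice $0$ instances also lie in $\dm(A)$. Their fibers are nonempty because $y\in\ran(p)\subseteq\dm(A)$ forces $A(y)\neq\emptyset$; this is why slice $0$ is restricted to arguments in $\ran(p)$. Generic instances have fiber nonempty exactly when $A(x)\neq\emptyset$. Together these give $\dm(C)=\dm(A)$.

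\emph{First-solution degree.} For $C\lmT A$: on slice $j$ instances output $\kappa_1$ with no query. On slice $0$ instances, query $A$ at $y$ and prefix tag $0$. On generic instances, query $A$ at $x$ and prefix tag $0$.

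For $A\lmT C$ the difficulty is that slice-$j$ fibers contain no $A$-witnesses. I expect this to be the main obstacle, and the slice-$0$ detour resolves it. Given $x$, if $x\in\ran(p)$, query $C$ at $p(\tup{\bl0,x})$; this is a slice-$0$ instance with fiber $Q\cdot A(x)$, so stripping the tag from the answer gives an $A(x)$-witness. If $x\notin\ran(p)$, then $x$ is generic, so query $C$ at $x$ and strip the tag. In both cases $\bot$ is passed through unchanged. There is no regress, because the query $p(\tup{\bl0,x})$ is never reinterpreted: it is simply a slice-$0$ instance.

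\emph{Levels $j$.} Fix $j\le n$. On generic and slice-$0$ inputs, every valid list of length $j\le n<n+1$ leaves an unused copy of the witness underlying its first entry, exactly as in Lemma~\ref{lem:inflate}.

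On a slice-$i$ input the cases are as follows.
\begin{itemize}
\item If the list contains a tagged $B_i$-copy, another copy of that same witness is unused, since the list has at most $n$ entries and there are $n+1$ copies.
\item If the list consists only of constants, it is a subset of $K_i$, so its length is at most $i$. When its length is below $i$, output a missing constant.
\item The only remaining case is $j=i$ with the list equal to $K_i$ in some order. Then the residual is exactly $Q\cdot B_i(z)$. Query $B_i$ at $z$, tag the answer with $0$, and pass $\bot$ through.
\end{itemize}
Consequently $C_{[j]}\lmT B_j$. All other inputs are invalid, so the reduction returns $\bot$ locally.

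Conversely, $B_j\lmT C_{[j]}$: on input $z$, query $C_{[j]}$ at $\tup{p(\tup{\bl j,z}),\kappa_1,\ldots,\kappa_j}$. This list is always valid, and its residual is $Q\cdot B_j(z)$. Strip the tag from the answer, and return $\bot$ when the oracle returns $\bot$.

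Empty targets $B_j$ need no special treatment: the fiber is just $K_j$, and the residual is correctly empty. Polynomial bounds follow from the output bounds of $p$ and $p^{-1}$ and from the witness bounds of $A$ and the $B_j$.
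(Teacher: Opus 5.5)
Your proposal is correct and follows essentially the paper's construction. The paper likewise uses a repair slice of tagged copies of $A$ reached through the map $x\mapsto$ (repair instance) on $\ran(p)$, branch slices carrying $j$ constants plus $n+1$ copies of each $B_j$-witness, and generic instances with copies of $A$, and its level-$j$ case analysis likewise leaves only the constants-only list on branch $j$; your differences (indexing the repair slice by $y\in\ran(p)$ instead of by $z$ with fiber $A(p(z))$, list-encoded slice selectors, and one copy-tag set shared by all slices) are cosmetic.
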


The equality of domains concerns raw instances of $C$ and $A$.
The additional-solution tasks instead take encoded lists. Each equivalence
uses at most one query and must decode every legal oracle answer, including
$\bot$.

\subsection{Why copies and separate instance regions are needed}

The relations $B_1,\ldots,B_n$ are the target search problems given in the theorem. We want finding another witness of $C$ after one supplied witness to be equivalent to $B_1$, after two supplied witnesses to be equivalent to $B_2$, and so on. Thus $B_r$ is the chosen target for the task with exactly $r$ supplied witnesses.
Fix $i\in\{1,\ldots,n\}$. To encode $B_i$, we assign each target instance $z$ an instance $x$ of $C$. Suppose first that $C(x)$ contains $i$ explicit constants and one tagged copy of each witness in $B_i(z)$, with no other witnesses. Supplying all $i$ constants leaves exactly the target copies. Finding another witness then amounts to finding a witness in $B_i(z)$.
The problem appears when more than $i$ witnesses are supplied. For example, suppose $i<n$ and the list contains all $i$ constants together with the copy of one target witness $u$. This is a list of length $i+1$. An additional witness must encode a different witness of $B_i(z)$, so the remaining task is finding a second solution of $B_i$. But the theorem requires the task with $i+1$ supplied witnesses to have the degree of the independently chosen relation $B_{i+1}$. There is no reason for these two degrees to agree.
Following the witness-copy argument of Lemma~\ref{lem:inflate}, we therefore give each witness in $B_i(z)$ exactly $n+1$ tagged copies. A supplied list of length at most $n$ cannot contain all copies of any target witness. If it contains a copy of $u$, we can return an absent copy of that same $u$, without finding a different target witness. If the list contains only constants but omits one, we return the omitted constant. The only remaining case is the list containing all $i$ constants and no copies: it has length $i$ and leaves exactly the copies of $B_i(z)$. Thus this instance encodes the desired target search at list length $i$, while every valid list of another length between $1$ and $n$ can be extended in polynomial time.

We now specify which input strings $x$ of the constructed relation $C$ will be used for each target problem. We choose pairwise disjoint sets $S_0,\ldots,S_n\subseteq\dm(A)$ and call each set a \emph{slice}. For $1\le i\le n$, inputs $x\in S_i$ will encode instances of $B_i$. The set $S_0$ will be used to preserve the first-solution degree of $A$. Thus a slice is a set of input strings $x$, whereas $C(x)$ is the set of witnesses for one such input. The padding map $p$ (from the $1$-paddability assumption on $\dm(A)$) lets us construct and recognize these slices in polynomial time. Input strings outside their union will be called \emph{generic}.

\begin{lemma}[Slices]\label{lem:slices}
Fix $n\ge1$ and let $c_0,\ldots,c_n$ be the distinct binary codes of
$0,\ldots,n$ of length $\ell_0=\lceil\log_2(n+1)\rceil$.
Put $p_i(z)=p(c_i z)$ and
\[
 G=\Sig^*\setminus\bigcup_{i=0}^n\ran(p_i).
\]
The images of the $p_i$ are pairwise disjoint polynomial-time decidable
subsets of $\dm(A)$ when $p$ pads $\dm(A)$. The maps $p_i$ and their
inverses are polynomial-time computable. Also
$p(\varepsilon)\in G$ and $\Sig^*\setminus\ran(p)\subseteq G$.
\end{lemma}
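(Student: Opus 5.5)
The argument reduces everything to the injectivity of $p$ and of the prefix map $(i,z)\mapsto c_iz$. Note first that $n\ge1$ gives $\ell_0\ge1$, so every code $c_i$ is a nonempty string of the same length $\ell_0$, and $c_0,\dots,c_n$ are pairwise distinct.

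First, the images lie in $\dm(A)$: $\ran(p_i)\subseteq\ran(p)\subseteq\dm(A)$ by the definition of $1$-paddability.

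Next, pairwise disjointness. Suppose $p_i(z)=p_j(z')$, that is, $p(c_iz)=p(c_jz')$. Since $p$ is injective, $c_iz=c_jz'$. Both prefixes have the fixed width $\ell_0$, so $c_i=c_j$ and $z=z'$, exactly as in the injectivity argument of Convention~\ref{con:blocks}. Distinctness of the codes then forces $i=j$. The same computation shows that each $p_i$ is injective.

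Then computability and decidability. The map $p_i$ is computed by prepending the constant $c_i$ and applying $p$; this is polynomial time. For the inverse on input $x$, run the inverse of $p$. If it returns $\bot$, output $\bot$. Otherwise it returns a string $w$. Check that $|w|\ge\ell_0$ and that the first $\ell_0$ bits equal $c_i$. If so, output the remaining suffix; otherwise output $\bot$. This correctly inverts $p_i$ on $\ran(p_i)$ and returns $\bot$ elsewhere, because $x\in\ran(p_i)$ exactly when $p^{-1}(x)$ is defined and begins with $c_i$. Hence each $\ran(p_i)$ is decided in polynomial time by running this inverse. Membership in $G$ is also decidable, since $n$ is fixed.

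Finally, the two facts about $G$. Every $\ran(p_i)$ is contained in $\ran(p)$, so any string outside $\ran(p)$ lies outside their union, which gives $\Sig^*\setminus\ran(p)\subseteq G$. For $p(\varepsilon)$, suppose $p(\varepsilon)=p(c_iz)$ for some $i$ and $z$. Injectivity of $p$ gives $\varepsilon=c_iz$, which is impossible because $|c_i|=\ell_0\ge1$. So $p(\varepsilon)\in G$.

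The only point needing care is the one where $n\ge1$ is used: it guarantees $\ell_0\ge1$. With $n=0$ we would have $\ell_0=0$, and the single empty code would make $p_0=p$, so $p(\varepsilon)$ would not be generic.
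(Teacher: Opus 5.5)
Your proposal is correct and follows essentially the same route as the paper. Both arguments rely on injectivity of $p$ with equal-width distinct codes for disjointness, a prefix-checking inverse via $p^{-1}$ for polynomial-time decidability, and $|c_i|=\ell_0\ge1$ to place $p(\varepsilon)$ in $G$; your explicit remark that $n\ge1$ forces $\ell_0\ge1$ spells out a point the paper uses only implicitly.
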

\begin{proof}
Given a string $x$, compute $p^{-1}(x)$. If the result is $\bot$ or has fewer than $\ell_0$ bits, then $x$ belongs to none of the images $\operatorname{ran}(p_i)$. Otherwise,
\[
x\in\operatorname{ran}(p_i)
\quad\Longleftrightarrow\quad
p^{-1}(x)=c_i z
\]
for some suffix $z$. Indeed, this equality is equivalent to $x=p(c_i z)=p_i(z)$. Deleting the prefix $c_i$ therefore recovers $p_i^{-1}(x)=z$. Computing $p^{-1}$ and checking the prefix take polynomial time, so each image is polynomial-time decidable and each inverse is polynomial-time computable. Each $p_i$ is also polynomial-time computable by its definition.

The images are pairwise disjoint because $p$ is injective and the codes $c_i$ are distinct and have equal length. Every image lies in $\operatorname{ran}(p)\subseteq\operatorname{dom}(A)$. Finally, $p^{-1}(p(\varepsilon))=\varepsilon$ has no $\ell_0$-bit prefix, so $p(\varepsilon)\in G$. Strings outside $\operatorname{ran}(p)$ also belong to no image, and hence lie in $G$.
\end{proof}

We use these slice maps for $\gamma$ and its arbitrary-list analysis. The later counter-copy construction will specify its own slices.

\subsection{Finite construction and proof}\label{sec:finite-proof}

Fix $n,A,B_1,\ldots,B_n$ and their common padding $p$ as in the theorem.
A branch slice $\ran(p_i)$, $i\ge1$, will carry the target $B_i$ at
level $i$. Replacing its original $A$ witnesses could lose information
needed to solve $A$. The \emph{repair slice} $\ran(p_0)$ restores that
access by storing copies of $A(p(z))$ for every $z$. Thus every original
instance inside $\ran(p)$ can be moved to a place retaining its original
witnesses. Instances outside the slices, in $G$, remain generic.

Use one ambient tag set $H=\{0,\ldots,2n\}$ in Convention~\ref{con:blocks}. For each $0\le i\le n$, define the set $K_i$ of constant strings and the set $T_i$ of copy labels by
\[
K_i=\{\bl t:0\le t<i\},\qquad T_i=\{i,\ldots,i+n\}.
\]
In particular $K_0=\emptyset$. The labels of the constants $K_i$ are disjoint
from $T_i$, and all codes use the same width
$\ell=\lceil\log_2(2n+1)\rceil$. Each $T_i$ has $n+1$ labels.
Set $B_0(z)=A(p(z))$. This auxiliary relation is total because padding
lands in $\dm(A)$; it need not have a polynomial-time refinement.

\begin{definition}[Spectrum construction]\label{def:gamma}
Let $\gamma=\gamma_n(A,B_1,\ldots,B_n)$ be given by
\[
 \gamma(x)=
 \begin{cases}
 T_0\cdot A(x),&x\in G,\\[2pt]
 K_i\uplus\bigl(T_i\cdot B_i(p_i^{-1}(x))\bigr),
       &x\in\ran(p_i),\quad 0\le i\le n.
 \end{cases}
\]
The padding $p$ is fixed but suppressed in the notation. Slice zero (for $p_0$) is the
repair slice; the other slices are branches. The disjoint union separates
constant strings from tagged witness copies, including copies of
$\varepsilon$.
\end{definition}

\begin{lemma}[Fibers and admissibility]\label{lem:fiber}
The relation $\gamma$ belongs to $\FNP$. Its fibers have the forms
\[
 \begin{array}{c|c}
 \text{instance} & \text{fiber}\\ \hline
 x\in G & T_0\cdot A(x)\\
 x=p_0(z) & T_0\cdot A(p(z))\\
 x=p_i(z),\ 1\le i\le n & K_i\uplus(T_i\cdot B_i(z)).
 \end{array}
\]
In the last row, $|\gamma(x)|=i+(n+1)|B_i(z)|$.
The repair and branch fibers are nonempty, whether or not $B_i(z)$ is empty.
\end{lemma}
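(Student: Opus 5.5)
The plan is to read off the fiber table from Definition~\ref{def:gamma}, then give a polynomial-time verifier with a polynomial witness bound, and finally check nonemptiness and count the branch fibers.

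\textbf{Fiber table.} The first row is the generic case of the definition. For $x=p_0(z)$ we have $K_0=\emptyset$ and $B_0(p_0^{-1}(x))=B_0(z)=A(p(z))$, which gives the second row. The third row is the branch case with $p_i^{-1}(x)=z$. Lemma~\ref{lem:slices} guarantees the cases are well defined, since the images $\ran(p_i)$ are pairwise disjoint and $G$ is their complement.

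\textbf{Verifier.} Given $(x,v)$, the verifier first decides which case holds. It computes $p^{-1}(x)$ and tests the $\ell_0$-bit prefix as in Lemma~\ref{lem:slices}, recovering $z$ when $x$ lies in a slice.
\begin{itemize}
\item If $x\in G$: require $|v|\ge\ell$, require the $\ell$-bit prefix to be $\bl 0$, and run $V_A(x,u)$ on the suffix $u$.
\item If $x=p_0(z)$: do the same, but compute $p(z)$ and run $V_A(p(z),u)$.
\item If $x=p_i(z)$ with $i\ge1$: accept if $v=\bl t$ for some $t<i$. Otherwise require a prefix $\bl t$ with $t\in T_i$ and run $V_{B_i}(z,u)$ on the suffix.
\end{itemize}
This is where the shared width $\ell$ of Convention~\ref{con:blocks} matters. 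A constant $\bl t$ has exactly $\ell$ bits, while a copy $\bl t u$ has a label in $T_i$, disjoint from $\{0,\ldots,i-1\}$. So the two kinds of witness never coincide, even when $u=\varepsilon$, and the union really is disjoint.

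\textbf{Running time and witness bound.} The step needing care is the bound on $|z|$ and $|p(z)|$ in terms of $|x|$.
\begin{itemize}
\item Since $p^{-1}$ is polynomial-time computable, $|c_iz|\le q(|x|)$ for some polynomial $q$. Then $|p(z)|$ is polynomially bounded in $|z|$, because $p$ is polynomial-time.
\item Every accepted witness therefore has length at most $\ell$ plus the maximum of $b_A(|x|)$, $b_A(|p(z)|)$ and $b_{B_i}(|z|)$, all monotone majorants. This maximum is polynomial in $|x|$.
\item Each verifier call runs within $t(|x|+|v|+1)$ after polynomial preprocessing, and $n$ is fixed.
\end{itemize}
Hence $\gamma\in\FNP$.

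\textbf{Counting and nonemptiness.} For the branch row, the tagging map is injective and $|T_i|=n+1$. The union is disjoint and $|K_i|=i$, so $|\gamma(x)|=i+(n+1)|B_i(z)|$.

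Branch fibers contain $K_i$, which is nonempty because $i\ge1$. So they are nonempty even when $B_i(z)=\emptyset$.

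For the repair fibers, $p(z)\in\ran(p)\subseteq\dm(A)$, so $A(p(z))\ne\emptyset$, and then $T_0\cdot A(p(z))\ne\emptyset$.
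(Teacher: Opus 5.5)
Your route is the same as the paper's: classify $x$ using Lemma~\ref{lem:slices}, check tag and suffix on each kind of instance, bound witness lengths through polynomial bounds on $p^{-1}$ and $p$, and then read off the count and nonemptiness. There is one concrete error. On generic and repair instances your verifier requires the $\ell$-bit prefix to be $\bl 0$. But those fibers are $T_0\cdot A(x)$ and $T_0\cdot A(p(z))$, with $T_0=\{0,\ldots,n\}$, so each underlying witness has $n+1$ tagged copies. Your test rejects every valid witness $\bl t u$ with $1\le t\le n$. It therefore decides the graph of a different relation, with one copy per witness, and it contradicts your own fiber table. As written, it does not establish $\gamma\in\FNP$.

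The fix is local. Accept when the prefix is $\bl t$ for some $t\in T_0$, exactly as you already do with $T_i$ on branches; your length bound covers these witnesses unchanged. This is not cosmetic for the construction. The $n+1$ copies on generic and repair fibers are what let the proof of Theorem~\ref{thm:gamma} extend every valid list of length at most $n$ that contains a copy. With single copies, a one-element list $\bl 0 u$ on $h(a)$ would leave exactly $\{\bl 0\}\cdot(A(a)\setminus\{u\})$, so level one would be at least as hard as $A_{[1]}$. With the correction, the rest of your argument matches the paper's proof: well-definedness of the cases, disjointness of constants and copies even for $u=\varepsilon$, the count $i+(n+1)|B_i(z)|$, and nonemptiness.
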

\begin{proof}
Given an instance $x$ and a candidate witness $v$, first determine whether $x$ is generic, on the repair slice, or on branch $i$, using Lemma~\ref{lem:slices}. On branch $i\ge1$, compute $z=p_i^{-1}(x)$ and accept if $v$ is one of the constants in $K_i$. Otherwise, require at least $\ell$ bits, read the first $\ell$ bits as a tag, and let $u$ be the remaining suffix. Accept exactly when the tag belongs to $T_i$ and $u\in B_i(z)$.

On generic and repair instances, require a tag in $T_0$ and verify the suffix in $A(x)$ or $A(p(p_0^{-1}(x)))$, respectively. Reject candidates with missing or inadmissible tags. The slice tests and instance transformations are polynomial-time computable. Since only finitely many fixed graph verifiers are used, the resulting verifier runs in polynomial time in $|x|+|v|+1$.

Choose a monotone polynomial $\rho$ bounding the lengths of $x$, every inverse image $p_i^{-1}(x)$, and the repair instance $p(p_0^{-1}(x))$ whenever these are used. Every copied witness then has length at most
\[
\ell+\max\left\{
b_A(\rho(|x|)),
\max_{1\le i\le n}b_{B_i}(\rho(|x|))
\right\}.
\]
Constants have length $\ell$, so the same bound covers them. Together with the graph verifier, this proves $\gamma\in\mathrm{FNP}$.

The displayed fibers follow directly from the definition, with $B_0(z)=A(p(z))$ substituted on the repair slice. On branch $i$, each underlying witness has exactly $n+1$ distinct tagged copies. Different underlying witnesses give different strings, and the copy tags are disjoint from the constant labels, even for an empty suffix. Hence
\[
|\gamma(p_i(z))|=i+(n+1)|B_i(z)|.
\]
Every repair fiber is nonempty because $p(z)$ belongs to the domain of $A$. Every branch fiber contains its $i\ge1$ constants, including when $B_i(z)$ is empty.
\end{proof}

Given a valid supplied list, we first try to produce another witness in either of two ways: return a constant missing from the list, or take a supplied copy $\bl t u$ and return another copy $\bl s u$ of the same witness $u$ that is missing from the list. We call these the two \emph{elementary rules}. In the second rule, the tag $s$ may already appear on a copy of a different witness; only the complete string $\bl s u$ must be absent. Neither rule applies exactly when the list contains every constant and, for each underlying witness represented in the list, all its tagged copies. We call such a supplied list a \emph{full grid}. The next lemma states this characterization formally.

\Needspace{10\baselineskip}
\begin{lemma}[Full grids]\label{lem:grid}
Use one shared finite tag encoding. Let $K$ be a finite set of constant strings whose labels are disjoint from a nonempty finite set $T$ of copy labels. Let $Y\subseteq\Sig^*$ and put $F=K\uplus(T\cdot Y)$, the set of available witnesses. Let $S\subseteq F$ be a finite set of supplied witnesses, and define
\[
U=\{u:\bl t u\in S\text{ for some }t\in T\}.
\]
Thus $U$ contains the underlying witnesses whose copies appear in $S$. Consider the two elementary ways to find a witness in $F\setminus S$: return a constant in $K\setminus S$, or return an absent copy $\bl t u$ with $t\in T$ and $u\in U$. Neither way produces a witness exactly when
\[
S=K\uplus(T\cdot U),
\]
that is, when $S$ contains every constant and every copy of each witness in $U$. Equivalently, $K\subseteq S$ and $|S|=|K|+|T||U|$. In that case,
\[
F\setminus S=T\cdot(Y\setminus U).
\]
Thus any remaining witness must be a copy of an underlying witness not yet represented in $S$.
\end{lemma}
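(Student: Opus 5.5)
The argument is a direct set-theoretic characterization. It rests on the injectivity and disjointness facts of Convention~\ref{con:blocks}. The map $(t,u)\mapsto\bl t u$ is injective on $T\times\Sig^*$, and no constant in $K$ equals a tagged copy, because the labels of $K$ are disjoint from $T$ and every code has the same width $\ell$. Consequently each element of $F$ is exactly one of two kinds: a constant, or a copy $\bl t u$ with a uniquely determined pair $(t,u)$. This makes $U$ well defined and gives $S=(S\cap K)\uplus(S\cap T\cdot Y)$ with $S\cap(T\cdot Y)\subseteq T\cdot U$.

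First I prove the main equivalence. Neither elementary rule produces a witness exactly when two conditions hold: $K\setminus S=\emptyset$, and every copy $\bl t u$ with $t\in T$ and $u\in U$ lies in $S$. One point needs care here. Such a copy really is an element of $F$, so an absent one would be a legitimate witness; this holds because $U\subseteq Y$, which follows from $S\subseteq F$ and the unique decoding. The two conditions give $K\uplus(T\cdot U)\subseteq S$. The decomposition above gives the reverse inclusion. Conversely, if $S=K\uplus(T\cdot U)$, then both rules clearly fail.

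Next I prove the counting form. The union $K\uplus(T\cdot U)$ is disjoint, and the copy map is injective, so $|K\uplus(T\cdot U)|=|K|+|T||U|$. Hence the equality $S=K\uplus(T\cdot U)$ implies the count. For the converse, assume $K\subseteq S$. We always have $S\subseteq K\uplus(T\cdot U)$. If the cardinalities agree, these finite sets must be equal. Here $U$ is finite because $S$ is finite.

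Finally, when $S=K\uplus(T\cdot U)$, I compute $F\setminus S=(K\setminus K)\uplus\bigl(T\cdot Y\setminus T\cdot U\bigr)=T\cdot(Y\setminus U)$, again by injectivity of the copy map. The only delicate point is the disjointness of constants and copies when the suffix is $\varepsilon$. The fixed shared width settles it: a copy of $\varepsilon$ is the tag string $\bl t$ with $t\in T$, and this differs from every constant label. Nonemptiness of $T$ is needed only so that $T\cdot U$ records $U$ faithfully in the count.
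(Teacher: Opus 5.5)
Your proof is correct and follows essentially the same route as the paper: failure of both rules gives $K\uplus(T\cdot U)\subseteq S$, the definition of $U$ gives the reverse inclusion, the count follows from disjointness and injectivity of the tag map, and $F\setminus S=T\cdot(Y\setminus U)$ is read off directly. Your explicit check that $U\subseteq Y$, so that absent copies are genuine witnesses, fills in a detail the paper leaves implicit; your closing remark that $T\ne\emptyset$ is needed is unnecessary, since for $T=\emptyset$ one has $U=\emptyset$ and every claim holds trivially.
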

\begin{proof}
The first rule finds another witness exactly when a constant is missing from $S$. It therefore fails exactly when $K\subseteq S$. The second rule fails exactly when every tagged copy of every underlying witness in $U$ is already in $S$, that is, when $T\cdot U\subseteq S$. By the definition of $U$, every supplied witness that is not a constant belongs to $T\cdot U$. Thus
\[
S\subseteq K\uplus(T\cdot U).
\]
Both rules fail exactly when the reverse inclusion also holds, giving
\[
S=K\uplus(T\cdot U).
\]
For the cardinality formulation, assume $K\subseteq S$. Then $S\setminus K\subseteq T\cdot U$, and
\[
|S\setminus K|=|S|-|K|,
\qquad
|T\cdot U|=|T||U|.
\]
The second equality follows from injectivity of the tag encoding. These finite sets are equal exactly when their cardinalities agree, so the full-grid condition is equivalent to $K\subseteq S$ and $|S|=|K|+|T||U|$.

Finally, removing a full grid removes every constant and every copy of each witness in $U$. The remaining witnesses are therefore exactly
\[
F\setminus S=T\cdot(Y\setminus U).
\]
If $U=Y$, this set is empty: failure of both rules does not itself guarantee that another witness exists.
\end{proof}

On branch slice $i$, a full grid has size $i+(n+1)|U|$.
On a generic or repair fiber it has size $(n+1)|U|$.
Now consider a valid supplied list of length $r$, where $1\le r\le n$, and let $S$ be its set of witnesses. If both elementary rules fail, Lemma~\ref{lem:grid} says that $S$ contains every constant and all $n+1$ copies of each underlying witness in $U$. Since $|S|=r\le n$, the list cannot contain a complete group of $n+1$ copies. Therefore $U=\emptyset$, so $S$ consists only of constants.
Generic and repair instances have no constants. On either type of instance, such a full grid would therefore be empty, contradicting $r\ge1$. On branch slice $i$, there are exactly $i$ constants, so $S=K_i$ and $r=i$. Thus both rules can fail only when the instance lies in $\ran(p_r)$ and the supplied list contains exactly $K_r$. In that case the remaining witnesses are $T_r\cdot B_r(z)$, where $z=p_r^{-1}(x)$. This is precisely the case in which finding another witness requires the target search $B_r$; every other valid list of length $r$ can be extended by an elementary rule.

\begin{proof}[Proof of Theorem~\ref{thm:gamma}]
Take $C=\gamma$. Lemma~\ref{lem:fiber} gives $\gamma\in\mathrm{FNP}$. On a generic instance $x\in G$, the fiber is nonempty exactly when $A(x)$ is nonempty. Every repair or branch instance belongs to $\operatorname{ran}(p)\subseteq\operatorname{dom}(A)$ and has a nonempty $\gamma$ fiber. Hence
\[
\operatorname{dom}(\gamma)=\operatorname{dom}(A).
\]

For $\gamma\lmT A$, first classify the instance. On a branch, return the constant with label zero without querying. On a generic instance, query $A$ at $x$. On a repair instance, query $A$ at $p(p_0^{-1}(x))$. In either query case, prefix a witness answer with tag zero and preserve $\bot$. The fiber formulas show that every returned witness belongs to $\gamma(x)$.

For the reverse reduction, we must avoid branch queries that could return constants instead of original $A$ witnesses. Define
\begin{equation}\label{eq:repair-map}
h(x)=
\begin{cases}
x,&x\notin\operatorname{ran}(p),\\
p_0(p^{-1}(x)),&x\in\operatorname{ran}(p).
\end{cases}
\end{equation}
If $x$ is outside the padding image, it is generic. Otherwise, $h(x)$ is a repair instance whose underlying $A$-instance is $p(p^{-1}(x))=x$. Thus, in both cases,
\[
\gamma(h(x))=T_0\cdot A(x).
\]
Query $\gamma$ at $h(x)$ and delete the tag from any witness answer. Return $\bot$ on an oracle answer $\bot$. Every witness answer yields a member of $A(x)$, and the queried fiber is empty exactly when $A(x)$ is empty. This proves $\gamma\emT A$.

Fix $1\le j\le n$. Decode the source input completely, require exactly $j$ supplied witnesses, and check their membership in $\gamma(x)$ and pairwise distinctness. Return $\bot$ if validation fails. Otherwise, let $S$ be the supplied set. Try the two elementary rules: return a missing constant, or an absent copy of an underlying witness already represented in $S$. Either successful rule gives a witness in $\gamma(x)\setminus S$ without a query. If both fail, $S$ is a full grid. It cannot contain a copied witness, since a full grid would then contain all $n+1>j$ copies of that witness. Thus $S$ consists only of constants. Generic and repair instances have no constants, while branch $i$ has exactly $i$ constants. Consequently, $x=p_j(z)$ and $S=K_j$, leaving
\[
\gamma(x)\setminus S=T_j\cdot B_j(z).
\]
Query $B_j$ at $z$ and prefix any witness answer with tag $j$, which belongs to $T_j$. Preserve $\bot$. If $B_j(z)$ is empty, the supplied constants exhaust the source fiber, so $\bot$ is also the correct source answer.

Given any input $z$, query $\gamma_{[j]}$ on the encoded instance $p_j(z)$ followed by all $j$ constants in $K_j$. These constants are valid and distinct, and the remaining witnesses are exactly $T_j\cdot B_j(z)$. Delete the tag from any witness answer to obtain a member of $B_j(z)$, and preserve $\bot$. This proves the reverse reduction.

All reductions make at most one query. Slice classification, validation, padding, and the fixed constant and copy tests take polynomial time in the complete source-input length. Query lengths are polynomially bounded, and composing these bounds with the target witness bounds also bounds every legal oracle answer and decoded output. Thus answer processing takes polynomial time. Every legal oracle answer is handled correctly, including $\bot$, without a separate test for exhaustion.
\end{proof}

Each underlying witness has exactly $n+1$ tagged copies. A supplied list of at most $n$ witnesses therefore leaves an unused copy of every underlying witness represented in the list. Longer lists can contain complete groups of copies. If a list contains every constant and all copies of every underlying witness represented in it, neither elementary rule gives another solution. The full-grid lemma then shows that the remaining solutions are exactly the copies of underlying witnesses not represented in the list. Section~\ref{sec:variable} uses this characterization to analyze arbitrary-length supplied lists.

\subsection{Prescribing finite tuples of search degrees}\label{sec:cor}

The fixed-level theorem gives two direct degree consequences. Without a
specified domain, every finite tuple of $\FNP$ degrees can occur. On a
specified $1$-paddable domain, the first degree must already be attainable
by some relation on that domain; the additional degrees can then be chosen
freely.

\begin{corollary}[Prescribing search degrees]\label{cor:C3}
Fix $n\ge1$. Given any $\FNP$ degrees $d_0,d_1,\ldots,d_n$, there is a relation $C\in\FNP$ such that
\[
\dg C=d_0,\qquad \dg C_{[j]}=d_j\quad(1\le j\le n).
\]
Moreover, let $X\in\NP$ be a fixed $1$-paddable language. If some relation $A\in\FNP$ has domain $X$ and degree $d_0$, then $C$ can be chosen with domain $X$ for any prescribed $\FNP$ degrees $d_1,\ldots,d_n$.
\end{corollary}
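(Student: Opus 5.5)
The plan is to derive both parts directly from Theorem~\ref{thm:gamma}. For each degree $d_j$ we pick a representative $B_j\in\FNP$ with $\dg B_j=d_j$.

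\textbf{Second statement first.} Let $X\in\NP$ be $1$-paddable via $p$, and let $A\in\FNP$ satisfy $\dm(A)=X$ and $\dg A=d_0$. Theorem~\ref{thm:gamma} applied to $A,B_1,\ldots,B_n$ gives $C\in\FNP$ with $\dm(C)=X$, $C\emT A$ and $C_{[j]}\emT B_j$. Hence $\dg C=d_0$ and $\dg C_{[j]}=d_j$. The degree notation is well defined because $C_{[j]}\in\FNP$ by the residual lemma, and equivalence classes are closed under $\emT$ by the transitivity shown in Section~\ref{sec:reductions}.

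\textbf{First statement.} Here no domain is prescribed, so it suffices to produce some $A'\in\FNP$ of degree $d_0$ whose domain is $1$-paddable. I would take a representative $A$ of $d_0$ and adjoin a fresh total region. Define $A'(0z)=A(z)$ and $A'(1z)=\{\varepsilon\}$, with $A'(\varepsilon)=\emptyset$.
\begin{itemize}
\item $A'\in\FNP$ is immediate.
\item $\dm(A')=0\dm(A)\cup 1\Sig^*$ is $1$-paddable via $p(z)=1z$. This map is injective, polynomial-time invertible, and has range inside the domain.
\item $A'\lmT A$: on $0z$ query $A$ at $z$, on $1z$ return $\varepsilon$, and on $\varepsilon$ return $\bot$.
\item $A\lmT A'$: on $z$ query $A'$ at $0z$ and return the answer unchanged.
\end{itemize}
Hence $\dg A'=d_0$, and the second statement applied to $X=\dm(A')$ yields $C$. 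We also need $X\in\NP$; this holds because $A'\in\FNP$, so its domain is in $\NP$.

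\textbf{Main obstacle.} Little is delicate here. The only points needing care are:
\begin{itemize}
\item The empty relation as a possible $d_0$: the padding trick still works, since $A'$ then has domain $1\Sig^*$ and is polynomial-time solvable, hence in the least degree.
\item Empty targets $B_j$, which Theorem~\ref{thm:gamma} explicitly allows.
\item Raw-input correctness of the two reductions on malformed strings, which is handled by the explicit $\varepsilon$ case and prefix decoding.
\end{itemize}
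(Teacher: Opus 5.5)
Your proposal is correct and follows essentially the same route as the paper. The paper likewise replaces a representative $A$ of $d_0$ by $\widehat{A}(0x)=A(x)$, $\widehat{A}(1x)=\{\varepsilon\}$, pads via $z\mapsto 1z$, and invokes Theorem~\ref{thm:gamma}. It applies the theorem directly to the given $A$ for the fixed-domain claim. Your only deviation is setting the fiber at $\varepsilon$ to be empty, where the paper uses $\{\varepsilon\}$, and this is immaterial.
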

\begin{proof}
Choose relations $A,B_1,\ldots,B_n\in\mathrm{FNP}$ representing the degrees $d_0,d_1,\ldots,d_n$. Following~\cite[Proposition~3.6]{EGG25}, define
\[
\widehat{A}(0x)=A(x),\qquad
\widehat{A}(1x)=\{\varepsilon\},\qquad
\widehat{A}(\varepsilon)=\{\varepsilon\}.
\]
A verifier uses the verifier of $A$ after an initial zero and accepts only the witness $\varepsilon$ on every other input. The witness bound of $A$ also bounds the copied witnesses, so $\widehat{A}\in\mathrm{FNP}$.

To reduce $A$ to $\widehat{A}$, query at $0x$ and return the oracle answer unchanged. Conversely, on input $0x$, query $A$ at $x$; on every other input, return $\varepsilon$ without querying. These reductions preserve $\bot$ on empty fibers, proving $\widehat{A}\emT A$.

The domain of $\widehat{A}$ is $1$-paddable via $z\mapsto1z$: every such input has witness $\varepsilon$, and the inverse deletes the initial one, returning $\bot$ on strings outside the image. Thus $\widehat{A}$ represents $d_0$ and satisfies the padding assumption. Theorem~\ref{thm:gamma}, applied to $\widehat{A},B_1,\ldots,B_n$, realizes the prescribed tuple.

For the fixed-domain claim, no normalization is needed. Apply the theorem directly to the given relation $A$ with domain $X$ and to representatives of the prescribed additional-solution degrees.
\end{proof}

The fixed-domain claim requires the first-solution degree to be realizable on $X$. For example, if a relation is polynomial-time solvable, its domain belongs to $\PP$: run its solver and accept exactly when the output is not $\bot$. Consequently, if $X\notin\PP$, no relation with domain $X$ can have the least search degree. Once a first-solution degree is realizable on a $1$-paddable domain, all finitely many additional-solution degrees can be chosen freely.

\section{Finding another solution after an arbitrary list}\label{sec:variable}

\begin{lemma}[Fixed levels reduce to arbitrary lists]\label{lem:fixed-star}
For every $f\in\FNP$ and every fixed $j\ge1$,
\[
 f_{[j]}\lmT f_{[\star]}.
\]
Consequently, for fixed $n\ge1$,
$\bigoplus_{j=1}^n f_{[j]}\lmT f_{[\star]}$.
\end{lemma}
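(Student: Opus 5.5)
The first claim is a one-query reduction that leaves the input unchanged apart from validation. Given a raw input $w$ for $f_{[j]}$, decode it completely and check that it has exactly $j+1$ fields $\tup{x,y_1,\ldots,y_j}$. Check that every $y_a$ lies in $f(x)$ using the verifier $V_f$, and that the $y_a$ are pairwise distinct. If any check fails, return $\bot$ without querying; this is correct because $f_{[j]}$ has an empty fiber off $\Val_j(f)$. If the input is valid, query $f_{[\star]}$ at the same string $w$. The code $w$ is a completely decoded valid list with $j\ge1$ supplied witnesses, so Definition~\ref{def:variable-operators} gives
\[
f_{[\star]}(w)=f(x)\setminus\{y_1,\ldots,y_j\}=f_{[j]}(w).
\]
Return the oracle answer unchanged, including $\bot$, which correctly handles exhausted lists. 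Validation takes time polynomial in $|w|$, since there are at most $|w|$ fields, each of length at most $|w|$. Every legal answer has length at most $b_f(|w|)$, so reading and copying it is also polynomial. The one point needing care is that raw strings valid for $f_{[\star]}$ but of a different arity must not be forwarded. The arity check guarantees this, because we query only at valid arity-$j$ codes.

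For the consequence, fix $n$ and reduce $\bigoplus_{j=1}^n f_{[j]}$ to $f_{[\star]}$. Decode the outer input as $\tup{J_j,w}$ under Convention~\ref{con:tagged-sum}, with an in-range selector and complete decoding. Return $\bot$ on malformed input. Then run the reduction above for level $j$ on $w$. This is one of finitely many fixed algorithms, so a single polynomial bounds all of them. If that reduction returns a witness $v$, output $\tup{J_j,v}$; preserve $\bot$.

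At most one query is made, and the answer lies in the correct component fiber. The selector of the output matches the input selector, as the sum's verifier requires. I expect no real obstacle here. The only subtle points are the raw-input conventions: complete decoding, exact arity, and returning $\bot$ locally on invalid inputs rather than querying. These are what make the identity map on valid inputs a legitimate refinement-preserving reduction.
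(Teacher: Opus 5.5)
Your proposal is correct and follows the paper's proof essentially step for step. You validate the exact-$j$ list locally and return $\bot$ on failure, forward the identical valid code to $f_{[\star]}$, and return the answer unchanged; for the finite sum you decode the selector, run the selected level's reduction, wrap any witness in the original selector, and preserve $\bot$.
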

\begin{proof}
Decode the source input completely and require exactly $j$ supplied
witnesses, all in $f(x)$ and pairwise distinct. Return $\bot$ locally
if any check fails. On a valid input, query $f_{[\star]}$ with that
same list and return its answer unchanged. The list has positive arity,
and the two residual fibers are equal, including when exhausted.
Validation, query writing, and every legal answer are polynomially bounded
in the complete input length. For the finite sum, first decode the
selector and then perform the selected level's reduction, wrapping a
witness answer in its original selector and preserving $\bot$. A fixed
finite family has one polynomial resource bound.
\end{proof}

Thus any prescribed arbitrary-list degree must dominate every prescribed
fixed-level degree. The compatibility question is whether this necessary
condition suffices while the domain and first-solution degree are retained.

\subsection{Arbitrary-list search for the finite construction}

At a fixed level the number of supplied witnesses is a constant of the
problem. Here it is part of the input. We therefore need one algorithm
whose running time is polynomial in the entire encoded list, for every
possible arity. A collection of separately polynomial fixed-level algorithms would not suffice: their constants could grow too quickly. For each fixed $r$, a running time such as $2^rL^2$ is polynomial in the encoded-input length $L$, because $2^r$ is then a constant. When $r$ varies with the input, this bound need not be polynomial in $L$; the fixed-level degrees do not guarantee one algorithm with a polynomial bound valid for all $r$.

Here $A$ is the original search relation whose domain and first-solution degree the construction $\gamma$ preserves, and $B_1,\ldots,B_n$ are the chosen target relations. Generic and repair instances of $\gamma$ have only tagged copies of witnesses from $A$, with no constants. Branch $j$ has $j$ constants together with tagged copies of witnesses from $B_j$.
Recall that $f_{[\star]}$ requires at least one supplied witness, whereas $\Res(f)$ also permits none. When both elementary rules fail, the supplied list contains every constant and all copies of each underlying witness represented in it. On a generic or repair instance, a nonempty list must therefore represent at least one witness from $A$. After removing the copy tags and repetitions, we obtain a positive supplied list for $A_{[\star]}$.
On branch $j$, however, the supplied list may contain only the $j$ constants. Removing them leaves no known witness of $B_j(z)$, so the remaining task is to find a first witness of $B_j(z)$. Other full grids may also supply underlying $B_j$ witnesses. The operator $\Res(B_j)$ covers both cases, allowing either zero or more supplied witnesses.

The next theorem characterizes finding another witness of $\gamma$ after an arbitrary positive-length supplied list. Its degree is determined by additional-solution search for the base relation $A$ and residual search for the target relations $B_j$. The tagged sum represents a choice among these tasks: an input specifies one component, and an answer solves that component. A reduction to this sum therefore needs at most one oracle query.
\begin{theorem}[Arbitrary-list search for the finite construction]\label{thm:gamma-star}
Fix $n\ge1$ and $A,B_1,\ldots,B_n\in\FNP$. Suppose that $\dm(A)$ is $1$-paddable via $p$, and let $\gamma$ be the relation constructed from these data in Definition~\ref{def:gamma}. Then
\[
\gamma_{[\star]}\emT A_{[\star]}\oplus\bigoplus_{j=1}^n\Res(B_j).
\]
In each direction, a single reduction works for every supplied-list length, uses at most one oracle query, and runs in time bounded by one polynomial in the complete source-input length. The number $n$, the relations, and the padding map are fixed. The target relations $B_j$ may be empty or partial.
\end{theorem}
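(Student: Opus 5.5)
We prove the two reductions separately, each by a single uniform algorithm whose first step is a polynomial-time validation of the complete source input, followed by the two elementary rules of Lemma~\ref{lem:grid}. Throughout we use the fiber table of Lemma~\ref{lem:fiber} and the slice maps of Lemma~\ref{lem:slices}.

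\emph{Forward direction.} Let $\tup{x,y_1,\ldots,y_r}$ with $r\ge1$ be the source input. Decode it completely, classify $x$ as generic, repair, or branch $j$, and check that the supplied list is valid. Return $\bot$ if any check fails. Let $S$ be the supplied set. The verification and comparisons cost time polynomial in the total length $L$, because $r\le L$.

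Next try the two elementary rules. This step needs care, since the constant set has size at most $n$ and the copy labels number $n+1$. For each represented suffix $u$ we test the $n+1$ candidates $\bl t u$ against $S$, which takes $O(nL)$ comparisons per supplied witness. If a rule succeeds, return the witness it produces without a query. Otherwise $S$ is a full grid, and Lemma~\ref{lem:grid} gives $\gamma(x)\setminus S=T\cdot(Y\setminus U)$. Form the list of distinct elements of $U$, in their order of first appearance.
\begin{itemize}
\item If $x$ is generic, then $Y=A(x)$ and $U\neq\emptyset$, since a generic fiber has no constants and $r\ge1$. Query $\tup{0,\tup{x,U}}$ in the $A_{[\star]}$ component.
\item If $x$ is a repair instance, query the same component with instance $p(p_0^{-1}(x))$.
\item If $x=p_j(z)$, query the $\Res(B_j)$ component at $\tup{z,U}$. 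Here $U$ may be empty, exactly as $\Res$ permits.
\end{itemize}
The elements of $U$ are valid and distinct underlying witnesses, so every query is a valid input. From a witness answer, strip the selector and prefix the smallest label of $T$ (that is, $0$ or $j$). The result lies in $T\cdot(Y\setminus U)$ and is therefore a correct source answer. An oracle answer of $\bot$ means $Y\setminus U=\emptyset$, so $\bot$ is also correct.

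\emph{Reverse direction.} Decode the selector and the component input, and validate the latter, returning $\bot$ on any invalidity. On the $A_{[\star]}$ component with input $\tup{x,u_1,\ldots,u_r}$, compute $h(x)$ by \eqref{eq:repair-map}, so that $\gamma(h(x))=T_0\cdot A(x)$. Query $\gamma_{[\star]}$ at $h(x)$ with the full grid of all $n+1$ copies $\bl t u_a$ for $t\in T_0$ and $1\le a\le r$. This list is valid, positive, and of length $(n+1)r$, which is polynomial in the input length. Its residual is exactly $T_0\cdot(A(x)\setminus\{u_a\})$ by Lemma~\ref{lem:grid}. Untag the answer, and preserve $\bot$.

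On the $\Res(B_j)$ component with input $\tup{z,u_1,\ldots,u_r}$, where $r\ge0$, query at $p_j(z)$ with the list consisting of $K_j$ together with all copies $T_j\cdot\{u_a\}$. This list is positive because $j\ge1$, and it is a full grid with residual $T_j\cdot(B_j(z)\setminus\{u_a\})$. Untag any witness answer and rewrap it with the source selector. When $B_j$ is empty or partial, validation already rejects every nonempty supplied list, and the case $r=0$ returns $\bot$ correctly through the oracle.

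\emph{Main obstacle: uniform resource bounds over all list lengths.} The key observation is that the forward query lists $U$ are no longer than $S$, and that the reverse lists grow only by the fixed factor $n+1$ plus at most $n$ constants. Every query therefore has length bounded by a fixed polynomial in $L$ that does not depend on $r$. The fixed witness bounds of $A$ and the $B_j$, applied to these lengths, bound every legal oracle answer. Hence one polynomial bounds both reductions for all arities.
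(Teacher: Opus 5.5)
Your proposal is correct and follows the paper's proof essentially step for step. The forward direction validates, tries the two elementary rules, then uses the full-grid lemma to send the residual to $A_{[\star]}$ on generic or repair instances (where $U\neq\emptyset$) or to $\Res(B_j)$ on branch $j$ (where $U$ may be empty). The reverse direction expands each underlying list into a full grid at $h(x)$, or at $p_j(z)$ with $K_j$ prepended, under one uniform polynomial bound.

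One remark is inaccurate as worded: validation rejects every nonempty list only at instances $z$ with $B_j(z)=\emptyset$, not whenever $B_j$ is ``empty or partial''. The reduction itself still handles those cases correctly, because the constants-only query then has an empty residual and the oracle returns $\bot$.
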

\begin{proof}
Put $q=n+1$. To reduce $\gamma_{[\star]}$ to the residual sum, decode the source input completely, require a nonempty supplied list, and check that its entries are distinct witnesses of $\gamma(x)$. Return $\bot$ without querying if validation fails. Otherwise, let $S$ be the supplied set and classify $x$ as generic, repair, or branch.

Try the two elementary rules: return a missing branch constant, or an absent copy of an underlying witness already represented in $S$. If either succeeds, return that witness without querying. Otherwise, let $U$ be the set of distinct suffixes of the supplied copies, ignoring constants. Lemma~\ref{lem:grid} gives, on branch $j$ with $x=p_j(z)$,
\[
S=K_j\uplus(T_j\cdot U),\qquad
\gamma(x)\setminus S=T_j\cdot(B_j(z)\setminus U).
\]
On a generic or repair instance, put $a=x$ or $a=p(p_0^{-1}(x))$, respectively. Then
\[
S=T_0\cdot U,\qquad
\gamma(x)\setminus S=T_0\cdot(A(a)\setminus U).
\]
In this case $U$ is nonempty because $S$ is nonempty and there are no constants.

List the elements of $U$ in lexicographic order as $u_1,\ldots,u_m$. On branch $j$, query the tagged sum in component $\operatorname{Res}(B_j)$ at $\langle z,u_1,\ldots,u_m\rangle$; here $m$ may be zero. On a generic or repair instance, query component $A_{[\star]}$ at $\langle a,u_1,\ldots,u_m\rangle$; here $m\ge1$. These inner lists are valid because their entries come from valid supplied copies.

Remove the component selector fields from any witness answer to obtain $v\notin U$. Prefix $v$ with tag $j$ on a branch or tag zero on a generic or repair instance. The displayed residual formulas show that this produces an unsupplied witness of $\gamma(x)$. They also show that the queried component is empty exactly when the source residual is empty, so an oracle answer $\bot$ is correctly returned as $\bot$.

For the reverse reduction, decode the source input and its component selector fields completely. Validate the inner list for the selected operator: require at least one supplied witness for $A_{[\star]}$, allow none for $\operatorname{Res}(B_j)$, and check all supplied memberships and pairwise distinctness. Return $\bot$ without querying if any check fails. A zero-witness list for $\operatorname{Res}(B_j)$ is valid even when $B_j(z)$ is empty.

For a valid $A_{[\star]}$ input $\langle a,u_1,\ldots,u_m\rangle$, query $\gamma_{[\star]}$ at $h(a)$ from~\eqref{eq:repair-map}, supplying all $q$ copies of every $u_i$ with tags in $T_0$. Since $m\ge1$, the expanded list is nonempty. The identity $\gamma(h(a))=T_0\cdot A(a)$ gives the residual
\[
T_0\cdot(A(a)\setminus\{u_1,\ldots,u_m\}).
\]
For a valid $\operatorname{Res}(B_j)$ input $\langle z,u_1,\ldots,u_m\rangle$, query at $p_j(z)$, supplying all constants in $K_j$ and all $q$ copies of every $u_i$ with tags in $T_j$. The list has $j+qm\ge1$ entries and leaves exactly
\[
T_j\cdot(B_j(z)\setminus\{u_1,\ldots,u_m\}).
\]
Both expanded lists are valid: their entries are witnesses, and the tag encoding makes them distinct.

Delete the copy tag from any oracle witness and wrap its suffix in the original component selector fields. This gives a witness of the source sum. Preserve $\bot$, since the source and queried residuals are empty in exactly the same cases. This includes empty suffixes, exhausted lists, and zero-witness branch lists.

For a uniform polynomial bound, let $L$ be the complete source-input length. Validation examines at most $L$ supplied strings. Forward extraction and sorting process at most $L$ suffixes; reverse expansion writes at most $n+qL$ witnesses. Since $n$ and $q$ are fixed, these operations and the padding transformations take polynomial time in $L$. Query lengths are polynomially bounded, and the fixed target witness bounds also bound every legal oracle answer and decoded output polynomially. Thus each direction uses at most one query and one polynomial running-time bound valid for every supplied-list length. Neither direction requires a separate test for exhaustion.
\end{proof}

Every legal oracle answer has polynomially bounded length, so reading and decoding it take polynomial time. Both reductions handle invalid inputs and exhausted lists. A branch list containing only constants leaves no supplied target witness, which is why $\Res(B_j)$ must allow an empty list.
Thus the theorem characterizes arbitrary-list search for this finite construction in terms of $A_{[\star]}$ and $\Res(B_1),\ldots,\Res(B_n)$. It does not prescribe its degree independently of these component tasks.

\subsection{Prescribing a compatible arbitrary-list degree}\label{sec:delta}

The finite construction uses a constant number of copies and retains the
component residual tasks in Theorem~\ref{thm:gamma-star}. We now give a
different construction. Each underlying witness has exponentially many
copies, identified by polynomial-length binary counters. A short list
cannot exhaust these copies. A list that does exhaust them is long enough
to pay for checking every possible underlying witness. Exponential
replication and this use of accumulated output length already occur in
the enumeration construction $\mathrm{Pad}_t$ of~\cite[Section~4, p.~11,
Lemma~17]{CS17v2}. Here they allow a prescribed arbitrary-list degree to
coexist with finitely many prescribed fixed-level degrees on a fixed domain.

\begin{theorem}[Compatible fixed-level and arbitrary-list realization]
\label{thm:delta}
Fix $n\ge1$ and let $A,B_1,\ldots,B_n,B\in\FNP$. Suppose that
$\dm(A)$ is $1$-paddable and $B_j\lmT B$ for $1\le j\le n$.
There is a relation $C\in\FNP$ such that
\[
 \dm(C)=\dm(A),\qquad C\emT A,\qquad
 C_{[j]}\emT B_j\ (1\le j\le n),\qquad C_{[\star]}\emT B.
\]
No assumption about $\PP$ versus $\NP$ is required.
\end{theorem}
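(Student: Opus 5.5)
The plan is a new construction $\delta$ in the style of Definition~\ref{def:gamma}, with three changes. Copy tags become counters of polynomial length, there is one extra slice that carries $B$, and the residual after a full grid is resolved by exhaustive search instead of by a query.

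\textbf{Construction.} Fix the padding $p$ and use $n+2$ slices $p_0,\ldots,p_n,p_{n+1}$ as in Lemma~\ref{lem:slices}, with codes of width $\lceil\log_2(n+2)\rceil$. Put $B_0(z)=A(p(z))$ and $B_{n+1}=B$.
\begin{itemize}
\item Choose a monotone polynomial $N$ with $N(m)\ge n+1$ and $N(m)\ge 1+\max\{b_A(\rho(m)),b_{B_j}(\rho(m)),b_B(\rho(m))\}$, where $\rho$ bounds the preimage and repair-instance lengths as in Lemma~\ref{lem:fiber}.
\item A copy of an underlying witness $u$ at instance $x$ is the string $0\,c\,u$, where $c$ is any of the $2^{N(|x|)}$ binary counters of length $N(|x|)$.
\item A constant is a string $1\,\bl t$, with labels drawn from $\{0,\ldots,n\}$.
\item On $G$ and on the repair slice the fiber is $\{0cu:u\in A(\cdot)\}$, with no constants.
\item On branch $j$ (for $1\le j\le n$) the fiber is $K_j$ together with the counter copies of $B_j(z)$.
\item On the new slice $\ran(p_{n+1})$ the fiber is the $n+1$ constants together with the counter copies of $B(z)$.
\end{itemize}
Membership in $\FNP$, $\dm(\delta)=\dm(A)$, and $\delta\emT A$ (via the repair map $h$ of~\eqref{eq:repair-map}) go through exactly as in Lemma~\ref{lem:fiber} and the proof of Theorem~\ref{thm:gamma}, because copies carry at most $1+N+b$ bits.

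\textbf{Fixed levels.} Lemma~\ref{lem:grid} applies with $T$ the set of counters, so $|T|=2^{N}>n$. Hence a valid list of length $j\le n$ can block both elementary rules only if it consists solely of constants and contains all of them. On generic and repair instances there are no constants, so this cannot happen. On the $B$-slice it would force $j=n+1>n$, which is also excluded. Therefore the list must be $K_j$ on branch $j$, and the residual is exactly the copies of $B_j(z)$. The reductions in both directions are then those in the proof of Theorem~\ref{thm:gamma}. The counter length depends on $|x|$, but the rules only need to write one absent counter. Such a counter is found among the first $n+1$ counters in polynomial time.

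\textbf{Arbitrary lists.} For $\delta_{[\star]}\lmT B$, validate the input, try the elementary rules, and otherwise obtain a full grid $S=K\uplus(T\cdot U)$. There are two cases.
\begin{itemize}
\item If $U=\emptyset$, then $S$ is exactly the constant set. This requires a branch $j$ (residual $B_j(z)$) or the $B$-slice (residual $B(z)$). Query $B_j$ or $B$ through the sum $B\oplus\bigoplus_j B_j$, which reduces to $B$ with one query: decode the selector and run the hypothesized reduction $B_j\lmT B$. Composition then gives one query in total.
\item If $U\neq\emptyset$, the list contains all $2^{N(|x|)}$ copies of some $u$, so its encoded length satisfies $L\ge 2^{N(|x|)}\ge 2\cdot 2^{b(\cdot)}$. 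Enumerate every candidate underlying witness up to the relevant bound, verify it, and skip members of $U$. This takes about $2^{b+1}\,t(|x|+b+1)$ steps, which is polynomial in $L$. Return a fresh copy of the first candidate found, or $\bot$ if none exists. This answer is correct without any query, including exhausted lists.
\end{itemize}
For $B\lmT\delta_{[\star]}$, query at $p_{n+1}(z)$ with all $n+1$ constants supplied, strip the tag from the answer, and preserve $\bot$.

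\textbf{Main obstacle.} The delicate step is the uniform polynomial bound in the brute-force case. One must check that the counter length is chosen from $|x|$ and not from the list. One must also check that the $2^{N}$ copies of a single $u$ genuinely force the encoding length $L$ above the number of candidates times the verifier time. Finally, verification must be able to tell which counter length is required, which holds because $N(|x|)$ is computable from $x$. I would settle these bounds first and then check that validation of exponentially long lists (pairwise distinctness) remains polynomial in $L$, for example by sorting.
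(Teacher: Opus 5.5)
Your proposal is correct and follows essentially the paper's route: an extra slice $p_{n+1}$ carries $B$ behind $n+1$ constants, and each underlying witness gets exponentially many counter copies whose width exceeds the underlying witness bound. Fixed levels are then settled by the elementary rules, a nonempty full grid forces $L\ge 2^{N(|x|)}$ and so pays for exhaustive search, and the hypothesized reductions $B_j\lmT B$ are needed only for constants-only grids. The differences are cosmetic: you separate constants from copies by a leading bit rather than by tag index and length, and you route the constants-only case through $B\oplus\bigoplus_j B_j$. The one step you leave implicit, finding an absent counter for a represented suffix in time polynomial in $L$, is handled by pigeonhole among the first $k+1$ counters, where $k<2^{N(|x|)}$ is the number of supplied copies of that suffix, or, as in the paper, by sorting the counters and locating a gap.
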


Fix a padding map $p$. The slice maps in this subsection are new and
belong only to the new construction: use distinct fixed-width codes
$c_0,\ldots,c_{n+1}$ and set $p_i(z)=p(c_i z)$. Write
$G_\delta=\Sig^*\setminus\bigcup_{i=0}^{n+1}\ran(p_i)$.
The proof of Lemma~\ref{lem:slices} gives recognizable disjoint slices
with polynomial-time inverses. They are subsets of $\ran(p)$ and do not
cover it: $p(\varepsilon)$ is generic. Slice zero is repair; set
$D_i=B_i$ for $1\le i\le n$ and $D_{n+1}=B$.

Choose a monotone nonnegative integer-valued polynomial $\rho$ bounding
$|x|$, all $|p_i^{-1}(x)|$ when defined, and
$|p(p_0^{-1}(x))|$ on repair instances. Such a polynomial exists because
these maps are polynomial-time computable and must write their outputs.
Using monotone witness bounds for the fixed relations, choose a monotone
nonnegative integer-valued polynomial $b_*$ majorizing
\[
 b_A(\rho(m))+\sum_{i=1}^{n+1}b_{D_i}(\rho(m)).
\]
It therefore bounds the length of every underlying witness used at any
new instance $x$ of length $m$: witnesses of $A(x)$, of
$A(p(p_0^{-1}(x)))$, and of $D_i(p_i^{-1}(x))$.
The composition with $\rho$ is essential; the target witness bound at
the original input $z$ alone need not be a bound in $|x|$.

Use one tag width $\ell=\kappa=\max(1,\lceil\log_2(n+2)\rceil)$
for labels $0,\ldots,n+1$. Put
\[
 s(m)=b_*(m)+\kappa,\qquad Q_x=2^{s(|x|)},\qquad
 K_i=\{\bl t:0\le t<i\}.
\]
For $Y\subseteq\Sig^*$ define
\[
 \operatorname{Copy}_x(i,Y)
   =\{\bl i\,c\,u:c\in\Sig^{s(|x|)},\ u\in Y\}.
\]
The counter $c$ has the width determined by the actual instance $x$.
In particular,
\begin{equation}\label{eq:delta-count}
 Q_x\ge n+2,\qquad 2^{b_*(|x|)+1}-1<Q_x.
\end{equation}

\begin{definition}[Counter-copy construction]\label{def:delta}
Define $\delta=\delta_n(A,B_1,\ldots,B_n;B)$ by
\[
 \delta(x)=
 \begin{cases}
 \operatorname{Copy}_x(0,A(x)),&x\in G_\delta,\\[2pt]
 \operatorname{Copy}_x(0,A(p(z))),&x=p_0(z),\\[2pt]
 K_i\uplus\operatorname{Copy}_x(i,D_i(z)),
       &x=p_i(z),\quad 1\le i\le n+1.
 \end{cases}
\]
This definition leaves $\gamma$ unchanged. One copy tag and one
variable-width counter encode each copy; no exponential family is listed
as part of the construction.
\end{definition}

\begin{proof}[Proof of Theorem~\ref{thm:delta}]
Take $C=\delta$. To verify its graph, classify $x$ by the recognizable
slices. On branch $i$, accept the constants $K_i$. Otherwise, require at
least $\ell+s(|x|)$ bits, check the copy tag (zero on generic and repair
instances, $i$ on branch $i$), parse the next $s(|x|)$ bits as the counter,
and verify the remaining suffix in the appropriate underlying relation.
The finitely many verifiers and polynomial-time instance transformations
give one polynomial verifier bound in $|x|+|v|+1$. All witnesses have length
at most $\ell+s(|x|)+b_*(|x|)$. Constants have length $\ell$, whereas
copies have length at least $\ell+s(|x|)>\ell$ and use a tag distinct from
the branch's constant labels. Thus the union is disjoint even for suffix
$\varepsilon$, and $\delta\in\FNP$.

On a generic input, $\delta(x)$ is nonempty exactly when $A(x)$ is.
Repair fibers are nonempty since $p(z)\in\dm(A)$; branch fibers contain
their constants. All slices lie in $\dm(A)$, proving domain equality.
For $\delta\lmT A$, return a constant on a branch, query $A(x)$ on a
generic input, and query $A(p(p_0^{-1}(x)))$ on repair. Encode any witness
answer using tag zero and the all-zero counter of width $s(|x|)$, preserving
$\bot$. For $A\lmT\delta$, use the repair map
\[
 h_\delta(a)=
 \begin{cases}
 a,&a\notin\ran(p),\\
 p_0(p^{-1}(a)),&a\in\ran(p).
 \end{cases}
\]
Its underlying fiber is exactly $A(a)$, so
$\delta(h_\delta(a))=\operatorname{Copy}_{h_\delta(a)}(0,A(a))$.
Query there and delete the tag and the counter of width
$s(|h_\delta(a)|)$ from every witness answer. Empty fibers give $\bot$
in both directions. Polynomial output bounds for the instance maps and
the composed witness bounds justify writing each query and processing
every legal answer. This proves $\delta\emT A$.

Fix $1\le j\le n$. Completely decode and validate the exact $j$-list,
returning $\bot$ locally for malformed inputs, repetitions, or invalid
witnesses. Return a missing branch constant if one exists. Otherwise, if
a supplied copy has suffix $u$, inspect the counter values $0,\ldots,j$,
written with width $s(|x|)$, and return an absent copy of $u$. These are
$j+1$ distinct available copies because $Q_x\ge n+2>j$; a list of $j$
entries cannot contain them all. The only unresolved valid list has no
copies and contains every constant. A positive list of that form can occur
only on a branch, and its length forces that branch to be $j$ and its set
to be $K_j$. The residual is exactly $\operatorname{Copy}_x(j,B_j(z))$.
Query $B_j(z)$ and encode any witness with tag $j$ and a zero counter;
preserve $\bot$ when $B_j(z)$ is empty. Conversely, given $z$, query
$\delta_{[j]}$ at $p_j(z)$ with all $j$ constants. Delete the tag and
counter, using width $s(|p_j(z)|)$, from every witness answer, and preserve
$\bot$. The query is valid even when $B_j(z)$ is empty. The fixed number
of constants and counter tests and the composed witness bounds give
polynomial-time reductions in both directions.

For $\delta_{[\star]}\lmT B$, let $L$ be the complete source-input
length. Validate the whole positive list, including complete decoding,
membership and distinctness, and return $\bot$ if it is invalid. Let $S$
be its set of $r\ge1$ supplied witnesses. Return a missing constant if
there is one. Group the supplied copies by suffix $u$, and sort the
$s(|x|)$-bit counters within each group. A missing counter can be found
from a missing initial value zero, a gap between consecutive values, or
a missing final value $Q_x-1$. Return the corresponding copy when such
a gap exists. These tests scan only supplied counters. Binary increment
and comparison use $O(s(|x|)+1)$ bits, including the representation of
$Q_x$ as one followed by $s(|x|)$ zeros; they do not enumerate $Q_x$
values to find a gap.

If neither rule succeeds, let $U$ be the set of represented suffixes and
let $Y$ be the appropriate underlying fiber, namely $A(x)$, $A(p(z))$,
or $D_i(z)$. Every represented suffix has all its counter copies in $S$,
and every constant is supplied. The injective copy encoding therefore gives
\[
 S=K\uplus\operatorname{Copy}_x(i,U),\qquad
 \delta(x)\setminus S=\operatorname{Copy}_x(i,Y\setminus U),
\]
where $K=\emptyset$ and $i=0$ on generic and repair inputs.
Suppose first that $U\ne\emptyset$. Then $r\ge Q_x$.
There are $2^{b_*(|x|)+1}-1<Q_x\le r\le L$ candidate suffixes through
the witness-length bound, including $\varepsilon$. Enumerate these
candidates, verify each in the underlying relation, and test exclusion
from $U$. If one succeeds, return its copy with an all-zero counter.
If none succeeds, the displayed residual is empty, so return $\bot$
without a query. This exhaustive search is used only after a nonempty
full grid has been established.

Here is a uniform resource bound for that search and the earlier local
steps. Since $|x|\le L$, all transformed underlying instances have length
at most $\rho(L)$ and all candidates have length at most $b_*(L)$.
One fixed polynomial bounds their verification time, by taking a
majorant for the finitely many verifiers composed with these bounds.
There are fewer than $L$ candidates in the exhaustive case; generation,
comparison with at most $L$ supplied strings, and output writing each
take polynomial time in $L$. List validation, grouping, sorting and binary
counter arithmetic also take polynomial time in $L$, since there are at
most $L$ entries and counter widths are bounded by $s(L)$. Thus one
polynomial in $L$ bounds all these steps, independently of $r$.

If $U=\emptyset$, a positive full grid cannot be generic or repair,
where there are no constants. It is exactly $K_i$ on branch $i$.
For $1\le i\le n$, run the given one-query reduction from $B_i$ to $B$
on $z=p_i^{-1}(x)$. It may halt locally or query $B$ once. Use its decoded
$B_i$ witness, if any, to output a copy with tag $i$ and zero counter,
and return $\bot$ if it returns $\bot$. On branch $n+1$, query $B(z)$
directly and use the same encoding with tag $n+1$. In each case, the
remaining fiber is precisely the copies of $D_i(z)$, so every legal
oracle answer is handled correctly, including empty target fibers. Since
$|z|\le\rho(L)$ and the reduction family is fixed and finite, their
clocks, query lengths, legal $B$-answer lengths and decoded output lengths
have one polynomial bound in $L$. At most one final oracle query is made.
This proves $\delta_{[\star]}\lmT B$.

For $B\lmT\delta_{[\star]}$, given $z$, put $x=p_{n+1}(z)$ and
supply just the $n+1$ constants in $K_{n+1}$. This is a valid positive
list and leaves exactly $\operatorname{Copy}_x(n+1,B(z))$. Its code
has length $2(n+1)+3+2|x|+2(n+1)\ell$, polynomial in $|z|$.
Delete the tag and the instance-dependent counter of width $s(|x|)$
from every witness answer; return $\bot$ on $\bot$. The queried
residual is empty exactly when $B(z)$ is empty, and every witness answer
decodes into $B(z)$. The polynomial bound for $|x|$ and the witness bound
for $\delta$ bound reading and decoding every legal answer. This proves
the last equivalence and completes the construction.
\end{proof}

\begin{corollary}[Finite compatibility of degrees]\label{cor:compatible}
Fix a $1$-paddable language $X$ and an $\FNP$ degree $d_0$ realizable
on $X$. Given $\FNP$ degrees $d_1,\ldots,d_n,d_\star$, there is
$C\in\FNP$ with domain $X$, first degree $d_0$, fixed-level degrees
$d_j$ for $1\le j\le n$, and arbitrary-list degree $d_\star$ if and
only if $d_j\le d_\star$ for every $j$.
\end{corollary}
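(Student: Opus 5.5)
The corollary has two directions, and I would prove them separately, using Lemma~\ref{lem:fixed-star} for necessity and Theorem~\ref{thm:delta} for sufficiency. I read $d\le d'$ as the order induced on $\FNP$ degrees by $\lmT$. This order is well defined on equivalence classes because one-query reductions compose, as shown in Section~\ref{sec:reductions}.

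\emph{Necessity.} Suppose some $C\in\FNP$ has domain $X$, $\dg C=d_0$, $\dg C_{[j]}=d_j$ for $1\le j\le n$, and $\dg C_{[\star]}=d_\star$. Lemma~\ref{lem:fixed-star} gives $C_{[j]}\lmT C_{[\star]}$ for each fixed $j\ge1$. Passing to degrees yields $d_j\le d_\star$. No padding hypothesis is used here.

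\emph{Sufficiency.} Since $d_0$ is realizable on $X$, there is $A\in\FNP$ with $\dm(A)=X$ and $\dg A=d_0$, and $X$ is $1$-paddable by hypothesis. Choose representatives $B_1,\ldots,B_n,B\in\FNP$ of $d_1,\ldots,d_n,d_\star$. The assumption $d_j\le d_\star$ means $B_j\lmT B$. This is independent of the chosen representatives: if $B_j'\emT B_j$ and $B'\emT B$, transitivity transfers the reduction. Theorem~\ref{thm:delta} then supplies $C=\delta_n(A,B_1,\ldots,B_n;B)\in\FNP$ with $\dm(C)=X$, $C\emT A$, $C_{[j]}\emT B_j$, and $C_{[\star]}\emT B$. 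These are exactly the prescribed degrees.

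The only step needing care is bookkeeping. First, ``realizable on $X$'' should be taken to mean exactly the existence of such an $A$, as in the remark after Corollary~\ref{cor:C3}. Second, $n\ge1$ should be fixed, since Theorem~\ref{thm:delta} assumes $n\ge1$. If $n=0$ were allowed, I would apply the theorem with $n=1$ and $B_1=B$, and then discard the level-one conclusion. Beyond this, no real obstacle is expected, because all of the substance lies in Theorem~\ref{thm:delta}.
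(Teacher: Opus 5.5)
Your proposal is correct and matches the paper's proof: necessity follows from Lemma~\ref{lem:fixed-star} together with composition of one-query reductions. Sufficiency chooses $A$ on $X$ of degree $d_0$ and representatives $B_j,B$ with $B_j\lmT B$, then applies Theorem~\ref{thm:delta}; your remarks on representative-independence and the $n\ge1$ convention are harmless bookkeeping that the paper leaves implicit.
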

\begin{proof}
Necessity follows from Lemma~\ref{lem:fixed-star} and composition.
For sufficiency, choose $A$ of degree $d_0$ on $X$ and representatives
$B_j,B$ of the prescribed degrees. Degree domination gives
$B_j\lmT B$, so Theorem~\ref{thm:delta} applies.
\end{proof}

This prescribes only finitely many fixed levels. The extra branch can
impose behavior at unprescribed levels, and no infinite-sequence
prescription follows. The zero-allowed residual task remains different:
Lemma~\ref{lem:res-split} gives $\Res(\delta)\emT A\oplus B$, retaining
the first-solution component. The construction does not require nonempty
targets: if $B$ or a $B_j$ is empty, its constants-only list is exhausted
and the reductions return $\bot$. A $1$-paddable base domain itself is
necessarily infinite, so it cannot be empty.

\section{Comparing fixed-level and arbitrary-list complexity}\label{sec:fixed-variable}

Does the sequence of all fixed-level degrees determine the arbitrary-list degree? Assuming $\PP\ne\NP$, the following construction shows that it does not. We construct a total relation whose first-solution task and every fixed-level additional-solution task are polynomial-time solvable, but whose arbitrary-list task is $\FNP$-complete.
For each instance $x$, we include $\max\{1,|x|\}$ explicitly known witnesses, together with tagged witnesses of an $\FNP$-complete search problem. We call the explicitly known witnesses \emph{dummy witnesses}; they are genuine solutions and ensure that every instance has a solution. A fixed-length list leaves a dummy witness whenever $|x|$ exceeds that length. An arbitrary-length list can contain all dummy witnesses, leaving only the witnesses of the complete search problem.

For fixed $r$, any supplied list of length $r$ leaves an unused dummy witness whenever $|x|>r$. Otherwise, $|x|\le r$, so exhaustive search takes time bounded by a constant depending on $r$. Thus every fixed-level task is polynomial-time solvable. When the list length is part of the input, however, the list can contain all dummy witnesses even for arbitrarily large $x$. Finding another witness then requires solving the complete search problem.

Related padding constructions appear in~\cite[Proposition 2.36]{Str10} and in the $\mathrm{Pad}_t$ construction of~\cite[p.~11]{CS17v2}, which uses dummy outputs and replicated solutions. These constructions concern enumeration. Here, the list of all dummy witnesses has polynomial encoded length, allowing a one-query reduction to arbitrary-list search.

\subsection{A universal efficiently verifiable search relation}

We first construct an $\FNP$-complete relation to use in the counterexample. Its input specifies a verifier, an instance for that verifier, and two bounds: a maximum witness length and a maximum running time. The running-time bound, called a \emph{clock}, stops the simulation after the specified number of steps. Both bounds are encoded in unary: $1^t$ represents $t$ using $t$ bits. Thus the input length accounts for both the permitted witness length and the simulation time.
We write $\operatorname{bin}(i)$ for the binary representation of a positive integer $i$ with no leading zeros.

\begin{lemma}[Clocked universal search]\label{lem:universal-search}
There is a relation $\mathcal U\in\FNP$ that is complete under $\lmT$, and
$\dm(\mathcal U)$ is NP-complete.
\end{lemma}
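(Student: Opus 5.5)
We define $\mathcal U$ on inputs of the form $\tup{\operatorname{bin}(i),w,1^s,1^t}$, with $i\ge1$, using the list encoding of Convention~\ref{con:lists}. Here $i$ indexes a deterministic machine $M_i$ in a fixed standard enumeration. We set
\[
 \mathcal U(\tup{\operatorname{bin}(i),w,1^s,1^t})=\{y:|y|\le s,\ M_i(w,y)\text{ accepts within }t\text{ steps}\}.
\]
Every other raw input, including malformed codes and the index $0$, gets an empty fiber.

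\textbf{Membership in $\FNP$.} A verifier decodes the input completely and rejects if $|y|>s$. Otherwise it simulates $M_i$ on $(w,y)$ for $t$ steps with a universal machine. The overhead of this universal machine is polynomial in $|\operatorname{bin}(i)|$ and $t$, hence polynomial in the input length. Any accepted witness has length at most $s$, and $s$ is at most the input length. So the witness bound is linear.

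\textbf{Completeness.} Let $f\in\FNP$, with verifier $V_f=M_i$ running in time $t_f$ and witness bound $b_f$. On input $x$, the reduction writes the query
\[
 \tup{\operatorname{bin}(i),x,1^{b_f(|x|)},1^{t_f(|x|+b_f(|x|)+1)}}.
\]
Since $i$ is fixed, this takes polynomial time. The fiber of this query equals $f(x)$. The clock never cuts off a genuine witness, because the time bound $t_f$ covers every candidate of length at most $b_f(|x|)$. Conversely, every accepted candidate is a genuine witness. The reduction returns the oracle's answer unchanged, including $\bot$. Any legal answer has length at most $b_f(|x|)$, so reading it is polynomial. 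This gives $f\lmT\mathcal U$.

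\textbf{NP-completeness of the domain.} $\dm(\mathcal U)\in\NP$ because $\mathcal U\in\FNP$. For hardness, take $L\in\NP$ with a polynomial verifier $M_i$ and bounds $b,t$. The same map $x\mapsto$ the query above is a many-one reduction from $L$ to $\dm(\mathcal U)$: $x\in L$ if and only if the query's fiber is nonempty.

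\textbf{Main obstacle.} The hard step is the bookkeeping that makes the clocked simulation both sound and complete. The enumeration must be fixed so that $\operatorname{bin}(i)$ is decodable for every positive $i$. The universal simulation overhead must be polynomial in the input length even for huge $i$; this holds because the index is written in the input. Finally, the $t$ used in the reduction must dominate the verifier's running time on every candidate up to the bound.
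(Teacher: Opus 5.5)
Your proposal is correct and takes essentially the paper's route. You build a clocked universal relation with a unary witness cap and a unary clock, and assign each $f$ a fixed machine code whose clock dominates the verifier's running time on every candidate of length at most $b_f$. You return the oracle's answer unchanged, including $\bot$, and reuse the same instance map for NP-hardness of the domain.

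The only difference is in how machines are specified. You index them through a standard enumeration, while the paper fixes an explicit one-tape description format with syntax checks and gives malformed descriptions empty fibers. Both versions rely on the code being decodable and simulable in time polynomial in its own length, which you correctly flag. Both also need a monotone time bound so that $t_f(|x|+|y|+1)\le t_f(|x|+b_f(|x|)+1)$.
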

\begin{proof}
We first define a verifier that works for arbitrary machine descriptions, including descriptions of machines that do not halt. Use deterministic one-tape machines with a two-way infinite tape, alphabet $\{0,1,\mathrm{blank},\#\}$, and states $1,\ldots,k$, where $k\ge3$. State $1$ is the initial state, state $2$ accepts, and state $3$ rejects. The input $\langle a,y\rangle$ is written starting at cell zero, with the head there and all other cells blank.

Encode a description by the unary state count and a list of transitions. Each transition specifies the current state and symbol, the next state, the symbol to write, and the head movement. Use canonical binary state numbers and fixed codes for symbols and movements. Require complete decoding, valid state numbers and codes, and at most one transition for each state--symbol pair. Missing transitions reject. These syntax checks take polynomial time in the description length. For a well-formed input $x=\langle e,a,1^t,1^m\rangle$, define
\[
\mathcal U(x)=\{y:|y|\le m\text{ and the machine described by }e
\text{ accepts }\langle a,y\rangle\text{ within }t\text{ steps}\}.
\]
Malformed tuples, descriptions, or unary fields have empty fibers. An empty unary field represents zero, and a zero clock permits no accepting computation.

To verify a candidate $y$, check the input syntax and reject if $|y|>m$. Otherwise, simulate the described machine for at most $t$ steps and accept exactly when it reaches its accepting state within that bound. Since $t,m\le|x|$, the simulation is polynomially bounded even when the description varies. Indeed, the simulator can scan the transition table at each step and store only the input cells and the at most $t$ additional cells reached during the computation. Both the table and the stored configuration have polynomial length, and there are at most $t$ simulated steps. Thus one fixed polynomial bounds verification time in $|x|+|y|+1$. Every witness has length at most $m\le|x|$, proving $\mathcal U\in\mathrm{FNP}$.

Now fix any $f\in\mathrm{FNP}$. Compile its deterministic graph verifier into the chosen machine model, obtaining a fixed description $e_f$. Let $b_f$ be a monotone polynomial witness bound. Choose a polynomial $t_f$ bounding the compiled verifier's running time on every pair $(a,y)$ with $|y|\le b_f(|a|)$. Such a bound follows by substituting the witness bound into the verifier's polynomial running-time bound.

Define
\[
q_f(a)=\langle e_f,a,1^{t_f(|a|)},1^{b_f(|a|)}\rangle.
\]
The witness cap includes every member of $f(a)$, and the clock allows the verifier to finish on every candidate within that cap. Therefore
\[
\mathcal U(q_f(a))=f(a)
\]
for every $a$, including when $f(a)$ is empty. Query $\mathcal U$ at $q_f(a)$ and return its answer unchanged, including $\bot$. The query is polynomial-time computable and has polynomial length. This proves completeness under the required one-query reductions.

Finally, the domain of $\mathcal U$ belongs to $\mathrm{NP}$: guess a witness of length at most $|x|$ and run the universal verifier. For any language in $\mathrm{NP}$, apply $q_f$ to one of its polynomially bounded witness relations. The displayed fiber equality preserves the existence of a witness, giving a polynomial-time many-one reduction to the domain of $\mathcal U$. Hence that domain is $\mathrm{NP}$-complete.
\end{proof}

\subsection{Many easy witnesses, followed by a complete task}

The next proposition adds dummy witnesses to $\mathcal U$. For each fixed list length $r$, the additional-solution task has a polynomial-time algorithm, but its running-time bound may depend on $r$.

\begin{proposition}[Fixed-easy and variable-complete]\label{prop:fixed-variable}
There is a total polynomial-time-solvable $F\in\FNP$ such that $F_{[r]}$ is
polynomial-time solvable for every fixed $r\ge1$, while $F_{[\star]}$ is
$\FNP$-complete. Moreover, $F_{[\star]}$ is polynomial-time
solvable if and only if $\PP=\NP$.
\end{proposition}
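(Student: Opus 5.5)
We would take $F(x)=K_x\uplus\{\bl 1 y: y\in\mathcal U(x)\}$, where the dummy witnesses are $K_x=\{0\,\mathrm{bin}(i):1\le i\le\max\{1,|x|\}\}$ and copies of $\mathcal U$-witnesses carry the prefix bit $1$. The dummy and tagged parts are disjoint because they differ in the first bit. Every dummy has length at most $1+\lceil\log_2(|x|+1)\rceil$, and $\mathcal U$ is in $\FNP$ by Lemma~\ref{lem:universal-search}. Hence verification is polynomial: recognize a dummy by its first bit, check that the binary number lies in range, and otherwise run the verifier of $\mathcal U$ on the suffix. The relation is total, and it is polynomial-time solvable by returning $0\,\mathrm{bin}(1)$.

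For a fixed $r\ge1$, the algorithm for $F_{[r]}$ first validates the input. If $|x|>r$, the $r$ supplied strings cannot cover the $|x|$ dummies, so we scan the dummies $1,\ldots,r+1$ and return one that is absent. If $|x|\le r$, then $|x|$ is bounded by a constant. The witness bound of $\mathcal U$ is at most $|x|$, so there are only constantly many candidates of length at most $1+r$. We enumerate them, verify each, and exclude the supplied ones, returning $\bot$ if nothing remains. The total cost is polynomial in the input length plus a constant depending on $r$. For the arbitrary-list task, we reduce $\mathcal U\lmT F_{[\star]}$. Given $x$, query $F_{[\star]}$ with the list of all $\max\{1,|x|\}$ dummies; this list is valid and positive, and its length is $O(|x|\log|x|)$. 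The residual is exactly $\bl 1\cdot\mathcal U(x)$. We strip the tag from any witness answer and preserve $\bot$. Every $f\in\FNP$ reduces to $\mathcal U$, and composition of one-query reductions then gives hardness. Membership of $F_{[\star]}$ in $\FNP$ follows from Lemma~\ref{lem:res-split}.

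For the final equivalence, first suppose $F_{[\star]}$ is polynomial-time solvable. Then $\mathcal U$ is too, so $\dm(\mathcal U)\in\PP$. Since $\dm(\mathcal U)$ is NP-complete, this gives $\PP=\NP$. Conversely, if $\PP=\NP$, then every $\FNP$ relation has a polynomial-time refinement. To find a witness, use standard prefix search: ask the $\NP$ question ``does some extension of this prefix, of length at most the bound, satisfy the verifier and avoid the supplied list?'' and fix the answer one bit at a time. In particular $F_{[\star]}$ becomes solvable, with correct $\bot$ when the fiber is empty.

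The step we expect to need the most care is the uniformity bookkeeping. For $F_{[r]}$ the bound is allowed to depend on $r$, but it must be argued honestly in both the $|x|>r$ and $|x|\le r$ cases. For the reverse direction of the final equivalence, the prefix search must be written to respect the refinement convention, including the exclusion of the supplied list inside the $\NP$ predicate.
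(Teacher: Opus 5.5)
Your proposal is correct and follows essentially the same route as the paper. The paper uses the same relation $F(x)=D_x\cup\{1y:y\in\mathcal U(x)\}$ with $\max(1,|x|)$ binary-indexed dummies, the same case split on $|x|$ versus $r$ at fixed levels, the same all-dummies query reducing $\mathcal U$ to $F_{[\star]}$, and the same prefix search under $\PP=\NP$; the only difference is cosmetic: in the case $|x|\le r$ you brute-force all strings of length at most $r+1$ against $F(x)$, whereas the paper checks the dummies first and then enumerates $\mathcal U$-suffixes of length at most $|x|$.
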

\begin{proof}
Put $d(x)=\max(1,|x|)$ and define
\[
D_x=\{0\operatorname{bin}(i):1\le i\le d(x)\},\qquad
F(x)=D_x\cup\{1y:y\in\mathcal U(x)\}.
\]
The two sets are disjoint because their witnesses begin with different bits. The dummy witness $01$ belongs to every fiber, so $F$ is total and its first-solution task is polynomial-time solvable.

A verifier checks the initial bit. After a zero, it requires a canonical positive binary integer between $1$ and $d(x)$. After a one, it verifies the remaining suffix in $\mathcal U(x)$. These tests take polynomial time. Dummy witnesses have length at most $|x|+2$, and witnesses from the second set have length at most $|x|+1$. Thus $F\in\mathrm{FNP}$ with witness bound $|x|+2$.

Fix $r\ge1$. Decode the input completely, require exactly $r$ supplied witnesses, and check their membership in $F(x)$ and pairwise distinctness. Return $\bot$ if validation fails. If $d(x)>r$, compare the first $r+1$ dummy witnesses with the supplied list. At least one is absent, since the list has only $r$ entries. Return it. If $d(x)\le r$, check all dummy witnesses and return an absent one if possible. Since $|x|\le d(x)\le r$, enumerate every binary string $y$ of length at most $|x|$, including $\varepsilon$, and return $1y$ if $y\in\mathcal U(x)$ and $1y$ is absent from the list. Return $\bot$ if no candidate succeeds.

This checks every possible remaining witness. Validation and list comparisons take polynomial time in the complete input length. The exhaustive search examines at most $2^{r+1}-1$ candidates of length at most $r$, so its additional cost is bounded by a constant depending on the fixed $r$, together with polynomial list processing. Hence $F_{[r]}$ is polynomial-time solvable for every fixed $r$. This bound need not be polynomial when $r$ varies.

Given any input $x$ to $\mathcal U$, form the encoded list containing instance $x$ followed by every witness in $D_x$, ordered by its index. This is always a valid nonempty supplied list for $F$, and its remaining witnesses are exactly
\[
\{1y:y\in\mathcal U(x)\}.
\]
The complete query has length
\[
2d+3+2|x|+2\sum_{i=1}^d(1+|\operatorname{bin}(i)|)
=O((|x|+1)\log(|x|+2)),
\]
where $d=d(x)$, and can be written in polynomial time. Query $F_{[\star]}$ on this list, delete the initial one from any witness answer, and preserve $\bot$. This proves $\mathcal U\lmT F_{[\star]}$, including when $\mathcal U(x)$ is empty.

Lemma~\ref{lem:res-split} gives $F_{[\star]}\in\mathrm{FNP}$. Since $\mathcal U$ is complete by Lemma~\ref{lem:universal-search}, the preceding reduction proves that $F_{[\star]}$ is also complete.

If $F_{[\star]}$ has a polynomial-time refinement, the dummy-list reduction gives one for $\mathcal U$. Testing whether its output is $\bot$ then decides the NP-complete domain of $\mathcal U$ in polynomial time. Consequently, $\mathrm{P}=\mathrm{NP}$.

Conversely, assume $\mathrm{P}=\mathrm{NP}$. For an input $w$, the existence of a witness of $F_{[\star]}(w)$ is an NP question: guess a string within the polynomial witness bound and verify it. The existence of such a witness extending a specified binary prefix is also in NP. Both questions are therefore decidable in polynomial time.

First test whether a witness exists, returning $\bot$ if none does. Otherwise, start with the empty prefix. If the current prefix is itself a witness, return it. If not, append zero if the current prefix followed by zero extends to a witness, and append one otherwise. At every stage the chosen prefix extends to a witness, and its length never exceeds the polynomial witness bound. This constructs a witness in polynomial time and proves the converse.
\end{proof}

\begin{corollary}[Conditional failure of determination by fixed levels]
\label{cor:spectrum-star}
Assume $\PP\ne\NP$. Let $F$ be the relation from the preceding proposition and define $\mathbf1(x)=\{\varepsilon\}$ for every $x$. Both relations have domain $\Sig^*$ and polynomial-time-solvable first-solution tasks. For every fixed $r\ge1$,
\[
\dg F_{[r]}=\dg\mathbf1_{[r]},
\]
but
\[
\dg F_{[\star]}\ne\dg\mathbf1_{[\star]}.
\]
Thus the domain, first-solution degree, and entire sequence of fixed-level degrees do not determine the arbitrary-list degree.
\end{corollary}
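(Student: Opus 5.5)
The plan is to establish the five claims of Corollary~\ref{cor:spectrum-star} in order: domains and first solutions, equality of the fixed-level degrees, and then separation of the arbitrary-list degrees.

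\emph{Domains and first solutions.} Proposition~\ref{prop:fixed-variable} gives that $F$ is total and that its first-solution task is polynomial-time solvable. For $\mathbf1$, the verifier accepts exactly $\varepsilon$ and the witness bound is $0$, so $\mathbf1\in\FNP$. It is total, and the constant refinement $x\mapsto\varepsilon$ solves it.

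\emph{Fixed levels.} Next I would show that $\mathbf1_{[r]}$ and $\mathbf1_{[\star]}$ are polynomial-time solvable. Every fiber of $\mathbf1$ is the singleton $\{\varepsilon\}$. A valid positive list therefore consists exactly of the single entry $\varepsilon$, and its residual is empty. Any other raw input is invalid and also has an empty fiber. So the refinement decodes the input completely and returns $\bot$ on every input. Hence $\mathbf1_{[r]}$, for every $r\ge1$, and $\mathbf1_{[\star]}$ are all relations with a polynomial-time refinement, so they lie in the least degree. Proposition~\ref{prop:fixed-variable} says each $F_{[r]}$ is polynomial-time solvable and in $\FNP$. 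Two polynomial-time-solvable $\FNP$ relations are one-query equivalent, since each reduces to the other with no query by running the source's own refinement. This gives $\dg F_{[r]}=\dg\mathbf1_{[r]}$ for every fixed $r$.

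\emph{Separation.} If $\dg F_{[\star]}=\dg\mathbf1_{[\star]}$, then $F_{[\star]}\lmT\mathbf1_{[\star]}$. Replacing the oracle by the polynomial-time refinement of $\mathbf1_{[\star]}$ yields a polynomial-time refinement of $F_{[\star]}$. Proposition~\ref{prop:fixed-variable} then gives $\PP=\NP$, contradicting the hypothesis.

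The only point needing care, and hence the main obstacle, is that substituting a polynomial-time oracle into a one-query reduction stays polynomial. This holds because the query length is polynomially bounded and the oracle's output is $\bot$, which has constant length. Everything else is a direct citation of the proposition. The final sentence of the corollary then follows: $F$ and $\mathbf1$ share domain, first degree and every fixed-level degree, but differ at the arbitrary-list level.
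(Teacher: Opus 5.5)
Your proposal is correct and follows the paper's proof essentially step for step. You show that $\mathbf1_{[r]}$ and $\mathbf1_{[\star]}$ are everywhere empty and solved by always returning $\bot$, that each $F_{[r]}$ is polynomial-time solvable by Proposition~\ref{prop:fixed-variable}, and that $\PP\ne\NP$ rules out a polynomial-time refinement of $F_{[\star]}$. The only difference is that you spell out why a one-query reduction to a polynomial-time-solvable relation yields a polynomial-time refinement, a step the paper leaves implicit when it concludes that the two star degrees differ.
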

\begin{proof}
Both relations have a witness on every instance, and their first-solution tasks are polynomial-time solvable: return $\varepsilon$ for $\mathbf1$ and $01$ for $F$.

For $\mathbf1$, a valid one-witness list contains $\varepsilon$ and is exhausted. A longer list cannot be valid because supplied witnesses must be distinct. Thus every $\mathbf1_{[r]}$ and $\mathbf1_{[\star]}$ is everywhere empty and is solved by always returning $\bot$. The preceding proposition shows that every fixed-level task $F_{[r]}$ is also polynomial-time solvable. Hence the two relations have the same domain, first-solution degree, and degree at every fixed positive level. However, under $\mathrm{P}\ne\mathrm{NP}$, the proposition shows that $F_{[\star]}$ is not polynomial-time solvable. Its degree therefore differs from that of $\mathbf1_{[\star]}$.
\end{proof}

\subsection{What a global constant cap changes}

The previous example uses more dummies on longer inputs. A fixed upper
bound on every fiber prevents this mechanism: only finitely many supplied
arities can leave an answer. The cap must be a constant shared by all
instances, not merely a polynomial bound depending on the instance.

\begin{proposition}[Constant fiber caps]\label{prop:constant-cap}
Let $f\in\FNP$ and let $M\ge0$ be fixed with $|f(x)|\le M$ for all $x$.
With raw $f_{[0]}=f$,
\[
 f_{[\star]}\emT\bigoplus_{1\le r<M}f_{[r]},\qquad
 \Res(f)\emT\bigoplus_{0\le r<M}f_{[r]}.
\]
Consequently, easy fixed positive levels imply easy variable-list search
under a constant cap. Two constant-capped relations with the same fixed
positive degree sequence have the same star degree; equality also at raw
level zero gives equality of their $\Res$ degrees.
\end{proposition}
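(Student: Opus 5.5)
The plan is to prove both equivalences by explicit one-query reductions in each direction, using the cap to argue that only arities below $M$ can have a nonempty residual. The key observation is that a valid list of $r$ pairwise distinct witnesses of $f(x)$ needs $|f(x)|\ge r$. If $r\ge M$, then $|f(x)|\le M\le r$ forces the list to be all of $f(x)$, so the list is exhausted. Hence any valid input of arity $r\ge M$ has an empty residual, and the correct answer is $\bot$. This needs no query and no test of exhaustion, only validation.

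For $f_{[\star]}\lmT\bigoplus_{1\le r<M}f_{[r]}$, I would proceed as follows:
\begin{enumerate}
\item Decode the input completely and validate it as a positive list (membership and pairwise distinctness). Return $\bot$ if it is invalid.
\item Read off its arity $r$. If $r\ge M$, return $\bot$ by the observation above.
\item Otherwise $1\le r<M$. Query the sum at $\tup{J_r,w}$, where $w$ is the same code, strip the selector from a witness answer, and preserve $\bot$.
\end{enumerate}
For the reverse direction, decode the outer selector and inner code. Return $\bot$ on malformed input or on an inner list whose arity differs from the selected $r$. Then query $f_{[\star]}$ at the same inner code and wrap the answer in the original selector. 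The residual fibers coincide because $f_{[r]}$ and $f_{[\star]}$ agree on valid codes of arity $r$. The case of invalid inner lists needs a local validity check first, since $f_{[\star]}$ would otherwise accept a list of a different arity.

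The $\Res$ equivalence is identical, with arity $0$ now allowed. The zero-arity code $\tup{x}$ maps to the component $f_{[0]}=f$ at input $x$, as in the proof of Lemma~\ref{lem:res-split}. When $M=0$, every fiber is empty. Both sums are then empty and the reductions always return $\bot$, so the empty-family convention of Convention~\ref{con:tagged-sum} handles this case. When $M\le1$, the star sum is empty, which is consistent because every valid positive list is then exhausted.

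For the consequences:
\begin{itemize}
\item If every fixed positive level is polynomial-time solvable, then the finite sum is polynomial-time solvable, since componentwise solvers combine under one polynomial. Composing with the first equivalence gives a solver for $f_{[\star]}$.
\item For two capped relations, take $M$ to be the larger cap. Equal degrees at each level $r<M$ give equivalent finite sums by the componentwise remark after Convention~\ref{con:tagged-sum}. Transitivity then gives equal star degrees, and likewise equal $\Res$ degrees when level zero also agrees.
\end{itemize}

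I expect the main obstacle to be the uniformity of the resource bounds, since $r$ is read from the input. This is harmless here: the arity-$\ge M$ branch does only validation, which is polynomial in the full input length, and the remaining branches involve only finitely many fixed arities. The other point needing care is that the cap must be a constant and not a polynomial in $|x|$, which is exactly what makes the threshold $M$ independent of the instance.
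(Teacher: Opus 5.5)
Your proposal is correct and follows the paper's proof essentially step for step: validate, return $\bot$ locally for arities $r\ge M$ (where any valid list is exhausted), route arities $r<M$ to the matching component, and in reverse check the selected arity before querying the variable-list operator on the same code. The only cosmetic difference is in comparing two capped relations: you apply the equivalence directly with the larger cap, whereas the paper adjoins everywhere-empty components to the sums, and these amount to the same thing.
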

\begin{proof}
If $M=0$, every fiber of $f$ is empty. Both residual operators and both displayed sums are therefore solved by always returning $\bot$. Assume $M\ge1$.

To reduce a variable-list operator to the finite sum, decode and validate the source list, requiring $r\ge1$ for $f_{[\star]}$ and allowing $r=0$ for $\operatorname{Res}(f)$. Return $\bot$ if validation fails. If $r\ge M$, also return $\bot$: a valid list of length $M$ contains every witness, while a longer list cannot be valid.

If $r<M$, query the component indexed by $r$. For $r\ge1$, use the original supplied-list code as its inner input. For $r=0$, use the raw instance $x$, since $f_{[0]}=f$. Delete the component selector from any witness answer and preserve $\bot$. The selected component has exactly the source task's remaining witnesses.

For the reverse reduction, decode the outer input completely and check that its selector names a component in the sum. Return $\bot$ if this fails. For a positive component $r$, validate the inner input as a list containing exactly $r$ distinct witnesses of $f(x)$; return $\bot$ if it is invalid. Otherwise, query the corresponding variable-list operator on that same list. For component zero, the inner input is a raw instance $x$; query $\operatorname{Res}(f)$ at $\langle x\rangle$.

Wrap any witness answer in the original component selector and preserve $\bot$. This gives precisely the answers required by the source component, including $\bot$ on exhausted lists. All list processing is polynomial-time, and the fixed finite family gives one polynomial bound for query writing and answer processing. Each reduction uses at most one query.

For two relations with different constant caps, enlarge both sums to the larger cap. The added components are everywhere empty, since a list at or above the original cap cannot leave another witness. If the fixed positive degrees agree, the corresponding components are equivalent, so the star operators are equivalent. Agreement also at level zero gives equivalence of the residual operators.

When $M=1$, the positive sum is empty and $\operatorname{Res}(f)\emT f$. The argument requires a constant cap shared by all instances: an instance-dependent bound does not reduce the task to a fixed finite family of levels.
\end{proof}

\section{Totality, limitations, and remaining questions}\label{sec:limits}

\subsection{When does totalization produce total search?}\label{sec:totality}

A total search relation has a witness on every input. By contrast, invalid-input totalization adds the witness $\varepsilon$ only to invalid supplied-list inputs. A valid list containing every available witness still has no answer. Thus the totalized operator is total exactly when no valid list of the permitted length exhausts a fiber. This condition depends only on fiber sizes, with no complexity assumption.

\begin{observation}[Exact totality criterion]\label{obs:totality}
For $f\in\FNP$ and fixed $r\ge1$,
\[
 f_{[r]t}\in\TFNP\quad\Longleftrightarrow\quad
 \forall x\ |f(x)|\ne r.
\]
Also $f_{[\star]t}$ is total exactly when $f$ is empty, and no nonempty
$f$ has every positive $f_{[r]t}$ total.
\end{observation}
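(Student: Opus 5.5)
The plan is to unfold what totality of an invalid-input totalization means. By Convention~\ref{con:lists} and Definition~\ref{def:variable-operators}, every raw string that is not a valid list receives the witness $\varepsilon$. So $f_{[r]t}$ (respectively $f_{[\star]t}$) is total exactly when every \emph{valid} input has a nonempty residual. Membership in $\FNP$ is already given by Lemma~\ref{lem:res-split} and the discussion preceding it. Hence $f_{[r]t}\in\TFNP$ reduces to a purely combinatorial statement about fibers: no valid list of the permitted length is exhausted. No complexity assumption enters anywhere.

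For fixed $r\ge1$ I would argue both directions directly.

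\emph{Forward direction.} If some $x$ has $|f(x)|=r$, list its $r$ witnesses in any order as $\tup{x,y_1,\ldots,y_r}$. This is a valid input with empty residual, so $f_{[r]t}$ is not total.

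\emph{Converse.} Suppose $|f(x)|\ne r$ for every $x$, and let $\tup{x,y_1,\ldots,y_r}$ be valid. Then $f(x)$ contains $r$ distinct elements, so $|f(x)|\ge r$. Since $|f(x)|\ne r$, we get $|f(x)|\ge r+1$. Fibers of $\FNP$ relations are finite, so these counts are meaningful, and at least one witness remains. The main care here is the cardinality bookkeeping: validity forces $|f(x)|\ge r$, and this is where distinctness of the supplied entries is used.

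For the star operator, there are two cases.

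\emph{$f$ empty.} There are no valid inputs, so every raw string receives $\varepsilon$ and $f_{[\star]t}$ is total.

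\emph{$f$ nonempty.} Pick $x$ with $f(x)\ne\emptyset$. The fiber $f(x)$ is finite and has some size $r\ge1$, and listing all of it gives a valid positive list that is exhausted. Thus $f_{[\star]t}$ is not total.

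The same $x$ and $r$ witness the last claim: $|f(x)|=r$, so by the fixed-level criterion $f_{[r]t}$ is not total. Hence no nonempty $f$ has every positive level total.

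The only step I expect to need attention is making sure that finiteness of fibers (from the witness bound) is invoked, so that ``list all of $f(x)$'' is a legitimate finite list code.
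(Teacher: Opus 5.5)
Your proposal is correct and follows essentially the same route as the paper: invalid inputs always receive $\varepsilon$, so totality reduces to the absence of valid exhausted lists. For the converse, distinctness gives $|f(x)|\ge r$ and hence $|f(x)|>r$. For a nonempty $f$, a finite nonempty fiber of size $m$ yields one valid exhausted list that simultaneously defeats totality of $f_{[\star]t}$ and $f_{[m]t}$.
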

\begin{proof}
Invalid inputs always receive the witness $\varepsilon$, so totality depends only on valid supplied lists. If $|f(x)|=r$, listing all $r$ witnesses gives a valid exhausted input, on which $f_{[r]t}$ has no answer. Thus the totalized operator is not total.

Conversely, suppose no fiber has size exactly $r$. A valid list of $r$ distinct witnesses implies $|f(x)|\ge r$, and hence $|f(x)|>r$. At least one witness remains outside the list. Therefore every valid input has an answer, and invalid inputs have $\varepsilon$, proving totality.

If $f$ is empty, every positive supplied list is invalid, so $f_{[\star]t}$ is total. If $f$ is nonempty, choose a nonempty fiber and let its size be $m$. The fiber is finite, so listing all its witnesses gives a valid exhausted input for both $f_{[\star]t}$ and $f_{[m]t}$. Neither is total. Consequently, no nonempty relation has all its positive fixed-level totalizations total.

The exhausted list need only exist; this argument does not require a polynomial-time procedure for constructing it.
\end{proof}

\subsection{The total-search reduction convention}\label{sec:total-search}

For total search, one usually describes a reduction by an instance map
and an answer decoder. Because every target query has a witness, this
convention agrees with our one-query model in the total setting. We spell
out the bridge to avoid using it for partial residuals.
For $f,g\in\TFNP$, write $f\sm g$ when total polynomial-time maps
$a(x)$ and $\operatorname{dec}(x,y)$ satisfy
$\operatorname{dec}(x,y)\in f(x)$ for every $y\in g(a(x))$.
These maps give a one-query reduction: query $g(a(x))$ and decode its
answer. If $|a(x)|\le q(|x|)$, then every legal answer has length at most
$b_g(q(|x|))$. The decoder's polynomial bound in $|x|+|y|+1$ is thus
polynomial in $|x|$, and its legal output is also bounded by $b_f(|x|)$.

Conversely, clock a one-query reduction by its polynomial running-time
bound. Run it to its query or no-query halt, and use that query as $a(x)$;
in a no-query case use the fixed target instance $\varepsilon$. The decoder
repeats the clocked computation and substitutes $y$ for its oracle answer,
or ignores $y$ after a no-query halt. Define it on all arguments by returning
a fixed binary default on an illegal answer, timeout, second query, or
invalid output. Illegal answers can be detected using $g$'s graph verifier,
and outputs using $f$'s; $\bot$ is illegal for a total target. The clock
bounds the query and simulation, while the $\FNP$ witness bounds bound
all legal answers and outputs. On legal arguments the guards never apply,
since any legal answer extends to a refinement of $g$. Thus $\sm$ and
$\lmT$ agree for $f,g\in\TFNP$, including no-query cases. For partial
targets, the instance-map condition alone does not specify how an oracle
answer $\bot$ must be handled.

\subsection{A total construction for one fixed level}

For a chosen fixed level, totality is easy to guarantee: include that many
constants and at least one target witness in every fiber. Every valid list
of the chosen size then leaves an answer. This is a single-level claim;
Observation~\ref{obs:totality} excludes simultaneous totality at all
positive levels for a nonempty finite-fiber relation.

\begin{remark}[Realizing a total target at a fixed level]\label{rem:total-realization}
Fix $r\ge1$ and $B\in\TFNP$. There is a total polynomial-time-solvable relation $H_B\in\FNP$ such that
\[
(H_B)_{[r]t}\in\TFNP,\qquad
(H_B)_{[r]t}\equiv^p_{\rm sm}B,
\qquad
(H_B)_{[r]t}\emT B.
\]
Thus finding a first witness of $H_B$ is easy, while finding another after $r$ supplied witnesses has exactly the degree of $B$. For $r=1$, this is analogous to the construction and conclusions of~\cite[Definition~4.5, Theorem~4.6, and Corollary~4.9]{EGG25}: the theorem gives search equivalence and the corollary concerns totality. The self-contained argument below works for every fixed $r$. The non-totalized operator $(H_B)_{[r]}$ is also one-query equivalent to $B$, but is empty on invalid inputs and hence is not total.

Use the shared tag set $\{0,\ldots,r\}$ with width $\ell=\lceil\log_2(r+1)\rceil$. Let $K=\{\bl t:0\le t<r\}$ and define
\[
H_B(x)=K\uplus\{\bl r y:y\in B(x)\}.
\]
Each instance has $r$ explicit constants and at least one tagged $B$-witness. A constant therefore solves the first-solution task, and every valid list of $r$ witnesses leaves another witness. Invalid inputs receive $\varepsilon$ under totalization, so $(H_B)_{[r]t}$ is total. Its graph is polynomial-time decidable, and $\ell+b_B(|x|)$ bounds its witness lengths.

To reduce the additional-solution task to $B$, first validate the supplied list. On an invalid input, return $\varepsilon$. On a valid input missing a constant, return that constant. Otherwise, the $r$ supplied witnesses are exactly $K$. Query $B(x)$ and return $\bl r y$ for any answer $y$. To express this as an instance map and decoder, use the fixed target instance $\varepsilon$ in the two locally solved cases and let the decoder ignore the oracle answer. In the remaining case, use target instance $x$ and prefix the answer with $\bl r$.

For the reverse reduction, given $x$, supply all constants in $K$. This list is valid and leaves exactly
\[
\{\bl r y:y\in B(x)\}.
\]
Delete the tag from any answer to obtain a witness of $B(x)$. The query has length $2r+3+2|x|+2r\ell$. Since $r$ is fixed, both reductions have polynomial-time instance maps and decoders. Their equivalence under $\emT$ follows from the total-search reduction convention. For $(H_B)_{[r]}$, use the same two reductions, returning $\bot$ locally on invalid source lists. The reverse query still supplies the valid list $K$, so its legal witness answers still decode into $B(x)$. This proves its one-query equivalence separately from totalization.
\end{remark}

\subsection{Remaining questions}\label{sec:questions}

Our reductions validate supplied witnesses in polynomial time before making an oracle query. We have not extended the results to search classes in which verifying a witness is only known to be in $\NP$.

The fixed-domain constructions require $1$-paddability. Some restriction on the domain is necessary: on a finite domain, every $\FNP$ relation has only finitely many instance--witness pairs and can be solved by a fixed table. However, $1$-paddability also excludes sparse domains. Indeed, if $|p(z)|\le b(|z|)$ for a monotone polynomial $b$, injectivity gives at least $2^{a+1}-1$ domain strings of length at most $b(a)$. Can simultaneous realization hold on infinite domains under weaker assumptions? The corresponding single-level question appears in~\cite[Section~5]{EGG25}.

Theorem~\ref{thm:delta} settles finite compatibility: the arbitrary positive-list degree can be prescribed exactly when it dominates all the finitely many prescribed fixed-level degrees. Its extra branch also determines some unprescribed behavior, so the theorem does not settle prescribing an entire infinite fixed-level sequence together with an arbitrary-list degree. We also do not address retaining unchanged natural solution sets or computing all remaining witnesses, the third direction raised in~\cite[Section~5]{EGG25}.

\section{Conclusion}

Under $1$-paddability, we can prescribe any finite collection of fixed additional-solution degrees while preserving the domain and first-solution degree. The arbitrary positive-list degree can be prescribed as well if and only if it dominates every prescribed fixed-level degree. The separate counter-copy construction proves sufficiency without a P-versus-NP assumption.

For the finite construction, we characterize arbitrary-list search by a one-query equivalence with the component residual tasks. We also construct a total relation whose first-solution task and every fixed additional-solution level are polynomial-time solvable, while its arbitrary-list task is $\FNP$-complete. Assuming $\PP\ne\NP$, even the domain, first-solution degree, and entire fixed-level degree sequence therefore do not determine the arbitrary-list degree.

These constructions define new witness sets. They do not establish the same prescription results when a problem's original notion of solution must remain unchanged.

\end{document}